\newtheorem{theorem}{Theorem}
\newtheorem{claim}[theorem]{Claim}
\newtheorem{lemma}[theorem]{Lemma}
\newtheorem{observation}[theorem]{Observation}
\newtheorem{definition}[theorem]{Definition}
\newtheorem{corollary}[theorem]{Corollary}
\newtheorem{fact}[theorem]{Fact}
\newcommand{\congest}{${\mathsf{CONGEST}}$}
\newcommand{\dist}{\mbox{\rm dist}}
\newcommand{\dilation}{\mbox{\tt d}}
\newcommand{\congestion}{\mbox{\tt c}}
\newcommand{\ApproxGraphCover}{\mathsf{ApproxCycleCover}}
\newcommand{\DistEdgeConn}{\mathsf{DistEdgeConnec}}
\newcommand{\MinCut}{\mathsf{MinCut}}
\def\dnsparagraph#1{\par\vspace{2pt}\noindent{\bf #1}.}
\def\cA{{\cal A}}
\def\cC{{\cal C}}
\newcommand{\poly}{\mathsf{poly}}
\renewcommand{\paragraph}[1]{\vspace{0.15cm}\noindent {\bf #1}}
\title{Small Cuts and Connectivity Certificates: \\A Fault Tolerant Approach\footnote{Department of Computer Science and
    Applied Mathematics, Weizmann Institute of Science, Rehovot 76100,
    Israel. Emails:
    \texttt{\{merav.parter\}@weizmann.ac.il}.}}
\author{Merav Parter \thanks{Supported
    in part by grants from the Israel Science Foundation grant no.\ 2084/18.}}
\begin{document}
\begin{titlepage}
\date{}
\maketitle

\begin{abstract}
We revisit classical connectivity problems in the \congest\ model of distributed computing.
By using techniques from fault tolerant network design, we show improved constructions, some of which are even ``local" (i.e., with $\widetilde{O}(1)$ rounds) for problems that are closely related to hard global problems (i.e., with a lower bound of $\Omega(Diam+\sqrt{n})$ rounds).

\paragraph{Distributed Minimum Cut:} 
Nanongkai and Su presented a randomized algorithm for computing a $(1+\epsilon)$-approximation of the minimum cut using $\widetilde{O}(D +\sqrt{n})$ rounds where $D$ is the diameter of the graph. For a sufficiently large minimum cut $\lambda=\Omega(\sqrt{n})$, this is tight due to Das Sarma et al. [FOCS'11], Ghaffari and Kuhn [DISC'13]. 
\begin{itemize}
\item{\textbf{Small Cuts:}}
A special setting that remains open is where the graph connectivity $\lambda$ is small (i.e., constant). The only lower bound for this case is $\Omega(D)$, with a matching bound known only for $\lambda \leq 2$ due to Pritchard and Thurimella [TALG'11]. Recently, Daga, Henzinger, Nanongkai and Saranurak [STOC'19] raised the open problem of computing the minimum cut in $\poly(D)$ rounds for any $\lambda=O(1)$. In this paper, we resolve this problem by presenting a surprisingly simple algorithm, that takes a completely different approach than the existing algorithms. Our algorithm has also the benefit that it computes \emph{all} minimum cuts in the graph, and naturally extends to \emph{vertex} cuts as well. At the heart of the algorithm is a graph sampling approach usually used in the context of fault tolerant (FT) design.

\item{\textbf{Deterministic Algorithms:}}
While the existing distributed minimum cut algorithms are randomized, our algorithm can be made deterministic within the same round complexity. To obtain this, we introduce a novel definition of universal sets along with their efficient computation. This allows us to derandomize the FT graph sampling technique, which might be of independent interest.

\item{\textbf{Computation of all Edge Connectivities:}} We also consider the more general task of computing  the edge connectivity of all the edges in the graph. In the output format, it is required that the endpoints $u,v$ of every edge $(u,v)$ learn the cardinality of the $u$-$v$ cut in the graph. We provide the first sublinear algorithm for this problem for the case of \emph{constant} connectivity values. Specifically, by using the recent notion of low-congestion cycle cover, combined with the sampling technique, we compute all edge connectivities in $\poly(D) \cdot 2^{O(\sqrt{\log n\log\log n})}$ rounds.
\end{itemize}
%

\paragraph{Sparse Certificates:} 
For an $n$-vertex graph $G$ and an integer $\lambda$, a $\lambda$-sparse certificate $H$ is a subgraph $H \subseteq G$ with $O(\lambda n)$ edges which is $\lambda$-connected iff $G$ is $\lambda$-connected. 
For $D$-diameter graphs, constructions of sparse certificates for $\lambda \in \{2,3\}$ have been provided by Thurimella [J. Alg. '97] and Dori [PODC'18] respectively using $\widetilde{O}(D)$ number of rounds.
The problem of devising such certificates with $o(D+\sqrt{n})$ rounds was left open by Dori [PODC'18] for any $\lambda\geq 4$.  Using connections to fault tolerant spanners, we considerably improve the round complexity for \emph{any} $\lambda \in [1,n]$ and $\epsilon \in (0,1)$, by showing a construction of $(1-\epsilon)\lambda$-sparse certificates with $O(\lambda n)$ edges using only $O(1/\epsilon^2 \cdot\log^{2+o(1)} n)$ rounds. 


\end{abstract}

\end{titlepage}

\vspace{-10pt}\section{Introduction}\vspace{-5pt}
The connectivity of a graph is one of the most fundamental concept in graph theory and network reliability. 
In this paper, we revisit some classical connectivity problems in the \congest\ model of distributed computing via the lens of \emph{fault tolerant} network design. 
We mainly focus on two problems:  exact computation of small\footnote{By small we mean of constant size.} edge (or vertex) cuts; and the computation of sparse connectivity certificates. Both of these problems have been studied thoroughly in the literature, and surprisingly still admit critically missing pieces. 
By using techniques from fault tolerant network design we considerably improve the state-of-the-art, as well as provide the first \emph{deterministic} distributed algorithms for these problems.
\vspace{-10pt}
\subsection{Small Cuts} \vspace{-5pt} In the distributed minimum cut problem, given a graph $G$ with edge connectivity $\lambda$, the goal is to identify at least one minimum cut, that is a collection of $\lambda$ edges whose removal disconnect the graph. In the output format, each vertex should learn at least one possible minimum (edge) cut. We start by providing a brief history for the problem\footnote{Although historically, the lower bound by Das-Sarma et al. \cite{sarma2012distributed} appeared before the upper bound algorithms, we reverse the order of presentation here.}. 

\vspace{-3pt}
\paragraph{A Brief History.}
\emph{(I) Upper Bounds: }The first non-trivial distributed algorithm for the minimum cut problem was given by Ghaffari and Kuhn \cite{ghaffari2013distributed}. They presented a randomized algorithm for computing a $(2+\epsilon)$ approximation of minimum cut using $\widetilde{O}_{\epsilon}(D+\sqrt{n})$ rounds, with high probability. 
Shortly after, Nanongkai and Su \cite{nanongkai2014almost} improved the approximation ratio to $(1+\epsilon)$ with roughly the same round complexity. Recently, Daga et. al \cite{DagaSTOC19} provided an exact algorithm with sublinear round complexity which improves up on the state of the art in the regime of large cuts (i.e., for $\lambda=n^{\Omega(1)}$).

\emph{(II) Lower Bounds: }
In their seminal paper, Das-Sarma et al. 
\cite{sarma2012distributed} presented a lower bound of $\widetilde{\Omega}(D+\sqrt{n})$ rounds for the computation of an $\alpha$-approximation of a weighted minimum cut which holds even for graphs with diameter $D=O(\log n)$. 
This lower bound applies only for weighted graphs with large weighted minimum cut of size $\Omega(\sqrt{n})$.
Ghaffari and Kuhn \cite{ghaffari2013distributed-Arxiv} extended this lower bounds in two ways. First, they considered 
a weaker setting for weighted minimum cuts where the edge weights correspond to capacities, and thus nodes can exchange $O(w \log n)$ bits over edges of weight $w$ in each round. They showed that even in this weaker model, the $\alpha$-approximation of minimum cut in $\lambda$-edge connected graphs with $\lambda=\Theta(\sqrt{n})$ and diameter $O(\log n)$ requires $\widetilde{\Omega}(\sqrt{n/(\alpha \cdot \lambda}))$ rounds\footnote{In the conference version of \cite{ghaffari2013distributed}, a lower bound of $\widetilde{\Omega}(D+\sqrt{n})$ was mistakenly claimed for \emph{any} $\lambda\geq 1$ and graphs with diameter $D=O(\log n)$. This was later on fixed in a modified arXiv version \cite{ghaffari2013distributed-Arxiv} and in Ghaffari's thesis \cite{GhaffariThesis17}.}. Observe that since in this construction $\lambda=\Theta(\sqrt{n})$, this lower bound can also be stated as $\Omega(\sqrt{\lambda})$ rather than $\Omega(D+\sqrt{n})$.  \indent In their second extension, Ghaffari and Kuhn attempted to capture also unweighted simple graphs. Here, they showed a lower bound of $\widetilde{\Omega}(D+\sqrt{n/(\alpha \cdot \lambda)})$ rounds, for any $\lambda\geq 1$ but only for graphs with diameter $D=1/\lambda \cdot \sqrt{n/(\alpha \cdot \lambda)}$. Again\footnote{Also here the conference version \cite{ghaffari2013distributed} mistakenly claimed that the lower bound works even for graphs with diameter $D=O(\log n)$, and this was fixed in \cite{ghaffari2013distributed-Arxiv,GhaffariThesis17}.}, with such a larger diameter, one can alternatively state this lower bound as $\Omega(\lambda \cdot D)$, rather than $\widetilde{\Omega}(D+\sqrt{n})$.  

\dnsparagraph{Computation of Small Cuts}
The conclusion from the above discussion is that we still do not have matching bounds for the distributed minimum cut problem in cases where either (i) the value of the weighted minimum cut is $o(\sqrt{n})$,  or (ii) the unweighted diameter is $o(\sqrt{n})$. 

As most real-world networks admit small cuts \cite{wang2003complex}, we are in particular intrigued by the complexity of computing the cuts in unweighted graphs with constant connectivity. Pritchard and Thurimella \cite{pritchard2011fast} showed an $O(D)$-round randomized algorithm for cut values up to $2$.
The problem of devising an $\poly(D)$ round algorithm for any constant $\lambda=O(1)$ was recently raised by Daga et al. \cite{DagaSTOC19}:
\begin{quote}
\emph{``A special case that deserves attention is when the graph connectivity is small. For example,
is there an algorithm that can check whether an unweighted network has connectivity at most $k$ in $\poly(k,D,\log n)$? ... Bounds in these forms are currently known only for $k \leq 2$."}
\end{quote}

We answer this question in the affirmative by presenting a $\poly(D)$-round algorithm for any constant connectivity $\lambda=O(1)$. This algorithm in fact computes all possible minimum cuts in $G$, in the sense that for any min-cut set $E'$ there is at least one vertex in the graph that knows $E'$. Turning to \emph{vertex} cuts, \cite{pritchard2011fast} showed an  $O(D+\Delta/\log n)$ round algorithm for computing the cut vertices. No exact algorithm is known for the case where the vertex connectivity is at least two. 
Our algorithm can be easily adapted to compute deterministically the (exact) vertex cuts in $\poly(D\cdot \Delta)$ rounds where $\Delta$ is the maximum degree. 

\vspace{-10pt}
\subsection{Sparse Connectivity Certificates}
For a given unweighted $n$-vertex $D$-diameter graph $G=(V,E)$ and integer $\lambda\geq 1$, a \emph{connectivity certificate} is a subgraph $H \subseteq G$ satisfying that it is $\lambda$-edge (or vertex) connected iff $G$ is $\lambda$-edge (or vertex) connected. The certificate is said to be \emph{sparse} if $H$ has $O(\lambda n)$ edges. 
Sparse certificates were introduced by Nagamochi and Ibaraki \cite{nagamochi1992linear}. 
Thurimella \cite{thurimella1997sub} gave the first distributed construction of $\lambda$-sparse certificates using $O(\lambda \cdot (D+\sqrt{n}))$ rounds in the \congest\ model. 
For $\lambda=2$, Censor-Hillel and Dory \cite{Censor-HillelD17} showed\footnote{\cite{Censor-HillelD17} studied the problem of the minimum $k$-edge-connected spanning subgraph ($k$-EECS), which for unweighted graphs implies the computation of connectivity certificates with a small number of edges.} the randomized construction of a certificate with $O(n)$ edges using $O(D)$ rounds. In \cite{Dory18}, Dory considered the case of $\lambda=3$, and showed the construction of a certificate with $O(n\log n)$ edges and $O(D \log^3 n)$ rounds. 
These algorithms are randomized and are based on the cycle space sampling technique of Pritchard and Thurimella \cite{pritchard2011fast}.  The problem of designing sparse certificates for any $\lambda \geq 4$ using $\widetilde{O}(D)$ rounds was left open therein \cite{Dory18}.

In this paper, we provide an easy solution for this problem which takes only $\widetilde{O}(\lambda)$ rounds for any $\lambda$. This is based on the observation that fault tolerant spanners are in fact sparse connectivity certificates. As a result we get that the problem of designing sparse certificates is \emph{local} rather than global (i.e., does not depend on the graph diameter). In the Our Results section we also improve the round complexity into  $\widetilde{O}(1)$ (i.e., independent on $\lambda$) by loosing a small factor in the approximation.

\subsection{Our Results} 
\paragraph{Distributed Computation of Small Minimum Cuts.}
We consider an unweighted $D$-diameter graph $G=(V,E)$ with edge connectivity $\lambda=O(1)$. We show a $\poly(D)$-round randomized algorithm to compute the minimum cut whose high level description can be stated in just few lines:  
Fix a vertex $s$ and apply $\poly(D)$ iterations, where in iteration $i$ we do as follows. (i) Sample a subgraph $G_i \subseteq G$ by adding each edge $e$ into $G_i$ independently with some fixed probability $p$. (ii) Compute a truncated BFS tree rooted at $s$ up to depth $O(\lambda D)$ in $G_i$, and (iii) let each vertex $t$ collect its $s$-$t$ path in this tree (if such exists). Finally, after applying this procedure for $\poly(D)$ iterations, each vertex $t$ computes locally the $s$-$t$ cut on the subgraph that it has collected.
The argument shows that every vertex $t$ that is separated from $s$ by some minimum cut $E'$, can compute this set of edges w.h.p.
\begin{mdframed}[hidealllines=false,backgroundcolor=white!25]
\vspace{-10pt}
\begin{theorem}\label{thm:min-cut-rand} Let $G$ be an $\lambda=O(1)$ connected $D$-diam graph and max degree $\Delta$.
There exists a randomized minimum cut algorithm that runs in $\poly(D)$ rounds. In addition, with a small modification it computes the minimum vertex cut in $\poly(D\cdot \Delta)$ rounds.
\vspace{-3pt}
\end{theorem}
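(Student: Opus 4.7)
The plan is to analyze the sampling-and-collect algorithm outlined just above the statement. After electing a source $s$ (e.g., the minimum-ID vertex, found in $O(D)$ rounds by a standard convergecast), I would run $T=\poly(D)\cdot \Theta(\log n)$ iterations of the following one-shot procedure. In iteration $i$, every edge of $G$ is placed into a virtual subgraph $G_i$ independently with probability $p=1-1/(c\lambda D)$ for a sufficiently large constant $c$; a BFS tree from $s$ in $G_i$ is computed, truncated at depth $h=\Theta(\lambda D)$; and each vertex $t$ reached by this BFS records, via a pipelined bottom-up copy, its $s$-to-$t$ path in the tree. Each iteration costs $O(h)=O(\lambda D)$ \congest\ rounds, so the total is $\poly(D)$ rounds as required. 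At the end, every vertex $t$ holds the subgraph $H_t\subseteq G$ given by the union of its recorded paths, and outputs a min $s$-$t$ cut of $H_t$ computed by a purely local max-flow routine.

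The heart of the correctness argument is the following single-iteration event. Fix $t$ and any min $s$-$t$ cut $E'=\{e_1,\ldots,e_\lambda\}$ of $G$ separating $s$ from $t$. For each $j\in[\lambda]$, let $\mathcal E_j$ be the event that in iteration $i$ both (i) $e_j\in G_i$ and $e_{j'}\notin G_i$ for every $j'\ne j$, and (ii) some $s$-$t$ path through $e_j$ of length $\le h$ is entirely contained in $G_i$. Under (i) every $s$-$t$ path of $G_i$ must use $e_j$, so the BFS tree path from $s$ to $t$ (whenever it exists) uses $e_j$ and $e_j$ enters $H_t$. Event (i) has probability $p(1-p)^{\lambda-1}=\Omega\bigl(1/(\lambda D)^{\lambda-1}\bigr)=1/\poly(D)$ since $\lambda=O(1)$. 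For (ii), the structural input I would invoke is the fault-tolerant diameter bound: if $G$ is $\lambda$-edge-connected with diameter $D$, then $G\setminus F$ remains connected with diameter $O(\lambda D)$ for every $|F|\le\lambda-1$. Applied to $F=E'\setminus\{e_j\}$, this yields an $s$-$t$ path $P_j$ of length $O(\lambda D)$ through $e_j$, each of whose edges survives in $G_i$ independently with probability $p$, so $\Pr[P_j\subseteq G_i]\ge(1-1/(c\lambda D))^{O(\lambda D)}=\Omega(1)$ for $c$ large enough. Combining, $\Pr[\mathcal E_j]\ge 1/\poly(D)$; $T=\poly(D)\cdot\Theta(\log n)$ repetitions and a union bound over the $n$ choices of $t$ and the $\lambda=O(1)$ indices $j$ give $E'\subseteq E(H_t)$ w.h.p.\ for every $t$ separated from $s$ by $E'$.

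Once $E'\subseteq E(H_t)$, the $\lambda$ captured paths (one per successful $\mathcal E_j$) are edge-disjoint in $H_t$ because each uses a distinct cut edge, so the $s$-$t$ max flow in $H_t$ is exactly $\lambda$ and the min $s$-$t$ cut in $H_t$ has size $\lambda$. Routing this max flow along the captured paths saturates precisely the edges of $E'$; the residual graph of $H_t$ therefore has $s$-reachable set equal to $V_s\cap V(H_t)$, where $V_s$ is the $s$-side of $E'$ in $G$, and the canonical Ford--Fulkerson min cut extracted from this reachability is exactly $E'$. Since $E_G(V_s, V_t) = E'$ by definition of the cut, this $\lambda$-edge set is then a bona fide min cut of $G$, so $t$'s output is correct.

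For the vertex variant I would simulate the standard vertex-splitting auxiliary graph $\widehat G$: each $v\ne s$ is split into $v^{\text{in}},v^{\text{out}}$ joined by a unit-capacity internal edge, and every edge $uv$ of $G$ becomes $u^{\text{out}}v^{\text{in}}$; the vertex connectivity of $G$ then equals the edge connectivity of $\widehat G$. Each original node simulates its gadget locally, but pushing one BFS level of $\widehat G$ across a gadget of degree $\Theta(\Delta)$ costs $\Theta(\Delta)$ \congest\ rounds because the gadget's traffic must serialize, yielding $\poly(D\Delta)$ rounds overall. The main obstacle in the proof is the probability analysis behind $\mathcal E_j$: tuning $p$ so that ``only $e_j$ among the $\lambda$ cut edges survives'' and ``a short bypass path through $e_j$ survives'' happen simultaneously with probability at least $1/\poly(D)$ is the delicate balancing act that the FT-sampling viewpoint is built for, and it relies critically on invoking the $O(\lambda D)$ fault-tolerant diameter bound so that the BFS depth --- and hence the round complexity --- stays polynomial in $D$.
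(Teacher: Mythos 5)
Your sampling scheme and the use of the fault-tolerant diameter bound are exactly the paper's, but the correctness argument you build on top has a real gap. You show only that the specific cut $E'$ lands in $H_t$, by tracking, for each $j\in[\lambda]$, one iteration where the sole surviving cut edge is $e_j$ and a short bypass path $P_j$ survives. From this you assert that the $\lambda$ captured paths ``are edge-disjoint in $H_t$ because each uses a distinct cut edge,'' hence the $s$-$t$ max flow in $H_t$ equals $\lambda$. That inference is false: using distinct cut edges does not make the paths edge-disjoint, since they can (and typically do) funnel through the same edges on the $s$-side or the $t$-side of the cut. For a concrete failure, glue two copies of $K_4$ at a vertex $v$, put $s$ in one copy, $t$ in the other, and take $E'$ to be the three edges from $v$ into $t$'s copy; all three of your $P_j$'s must pass through the single edge $(s,v)$, so the union of just those three paths has $s$-$t$ cut value $1$, and a local max-flow on it would report a spurious one-edge ``cut.'' Even if one somehow argued that $\lambda(s,t,H_t)=\lambda$, a second gap remains: your residual-graph paragraph does not rule out $H_t$ having a \emph{different} $\lambda$-edge $s$-$t$ cut that is not a cut of $G$, so ``canonical Ford--Fulkerson returns $E'$'' is asserted rather than proved.

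The paper closes both gaps by proving something strictly stronger: that $G_{s,t}$ is an $s$-$t$ \emph{connectivity certificate}, i.e., for \emph{every} $F\subseteq E$ with $|F|\le\lambda$ such that $s,t$ are connected in $G\setminus F$, they are also connected in $G_{s,t}\setminus F$. This is established by a union bound over all $n^{O(\lambda)}$ triplets $\langle s,t,F\rangle$ (not just the $\lambda$ subsets of one fixed cut), where an iteration is successful for $\langle s,t,F\rangle$ if $\pi(s,t,G\setminus F)\subseteq G_i$ and $F\cap G_i=\emptyset$. The certificate property simultaneously forces $\lambda(s,t,G_{s,t})=\lambda$ and guarantees, via the contrapositive invocation of Lemma~\ref{lem:bounded-length-connected}, that any $\le\lambda$-edge set the local computation outputs genuinely disconnects $s$ from $t$ in $G$. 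So you need to broaden your event family from ``$F=E'\setminus\{e_j\}$'' to ``all $F$ of size $\le\lambda$'' to make the argument go through. Separately, for the vertex case your reduction to the split graph $\widehat G$ with a single fixed source $s$ cannot detect minimum vertex cuts that contain $s$ itself (e.g., if $s$ is the unique cut vertex); the paper avoids this by running with $\lambda+1$ distinct sources and, rather than splitting vertices, sampling vertices instead of edges in the FT-sampling step.
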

\end{mdframed}
The algorithm is in fact stronger. Every vertex $t$ also learns a collection $(\lambda-1)$ edge disjoint paths from $s$ (i.e., an integral flow from $s$). In addition, we do not compute only one minimum cut but rather for each minimum cut in $G$, there is at least one vertex that learns it. 
%


\paragraph{Deterministic Computation of Small Cuts.}
So-far, the distributed minimum cut computation was inherently randomized. 
The randomized component of the algorithm of Thm. \ref{thm:min-cut-rand} is in the initial graph sampling in each iteration. To derandomize it, we introduce a new variant of \emph{universal-sets}. We use this notion to explicitly compute, in polynomial time, a collection of $\poly(D)$ subgraphs $G_1,\ldots, G_k$ that have the same key properties as those obtained by the sampling approach. The polynomial computation is done \emph{locally} at each vertex and thus does not effect the round complexity.
\begin{mdframed}[hidealllines=false,backgroundcolor=white!25]
\vspace{-8pt}
\begin{theorem}\label{thm:min-cut-det} 
One can compute small cuts deterministically in $\poly(D)$ rounds.
\vspace{-3pt}
\end{theorem}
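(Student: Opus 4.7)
The plan is to derandomize the sampling step of the randomized algorithm underlying Theorem~\ref{thm:min-cut-rand} by replacing the $\poly(D)$ random subgraphs $G_1,\ldots,G_k$ with an explicit deterministic family that every vertex can construct locally from the edge IDs and a few global parameters. The correctness of the randomized algorithm uses only the following combinatorial consequence of random sampling: for every candidate minimum cut $E' \subseteq E$ with $|E'|=\lambda$ and every edge $e_j \in E'$, some iteration produces a $G_i$ that \emph{isolates} $e_j$, i.e., $e_j \in G_i$ and $(E'\setminus\{e_j\}) \cap G_i = \emptyset$, while simultaneously the BFS from $s$ in $G_i$ still reaches an endpoint of $e_j$ on the $s$-side within the truncation depth $O(\lambda D)$. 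Once this holds for every min-cut $E'$ and every $e_j \in E'$, each vertex $t$ separated from $s$ by $E'$ collects $\lambda$ edge-disjoint $s$-$t$ paths crossing $E'$, and its local min-cut computation returns $E'$.

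To mimic this deterministically, I would introduce a new notion of \emph{universal sets} tailored to edge indicator vectors. Concretely, a family $\{\chi_1,\ldots,\chi_k\} \subseteq \{0,1\}^{|E|}$ is \emph{cut-universal} if for every $E' \subseteq E$ of size at most $\lambda$ and every pattern $\sigma \in \{0,1\}^{E'}$, some $\chi_i$ agrees with $\sigma$ on $E'$. A direct union bound shows that a random family of size $\poly(\lambda, \log m) = \poly(\log n)$ already has this property, establishing existence. A fully explicit construction of the same size then follows either by the method of conditional expectations applied edge-by-edge, or by adapting known $\lambda$-wise (almost) independent bit generators and perfect hash families of $\polylog(n)$ size. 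Since the construction depends only on the edge identifiers and on $n$ and $\lambda$, which can be gathered in $O(D)$ rounds, every vertex computes the same family without any additional communication.

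Plugging this into the algorithm of Theorem~\ref{thm:min-cut-rand}, each vertex derives $G_i$ by keeping edge $e$ iff $\chi_i(e)=1$ and then runs the truncated BFS and the $s$-$t$ path collection for each $i$. By cut-universality, for every minimum cut $E'$ and every $e_j \in E'$ there is an iteration in which exactly $e_j$ of $E'$ survives; combined with the short-path condition, vertex $t$ gathers $\lambda$ edge-disjoint $s$-$t$ paths, one through each $e_j$, from which $E'$ is reconstructed locally. The per-iteration cost is $O(\lambda D)$ as before, so the total round complexity remains $\poly(D)$. The vertex-cut extension applies the same scheme to the vertex indicator vectors, paying an extra $\poly(\Delta)$ factor in the support size and yielding the $\poly(D\cdot\Delta)$-round deterministic bound.

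The main obstacle is the short-path requirement: bare cut-universality guarantees the combinatorial isolation of any fixed $\lambda$-subset but not that the residual graph $G_i \setminus E'$ is sufficiently well-connected near $s$ to allow the BFS to reach an endpoint of $e_j$ within depth $O(\lambda D)$. To handle this, the universal set notion must be strengthened to reproduce Bernoulli-like behaviour on all small \emph{collections} of edges simultaneously, so that the restriction of $G_i$ to the $O(\lambda D)$-neighbourhood of $s$ behaves, for at least one $i$, like a genuine $p$-random subgraph on every relevant local structure. This is achievable by basing the family on $\Theta(\lambda + \log n)$-wise $\epsilon$-almost independent distributions, which admit explicit $\polylog(n)$-sized constructions; reconciling the two requirements — explicit isolation of each $\lambda$-tuple \emph{and} near-uniform density on every $O(\lambda D)$-sized neighbourhood — inside a single $\poly(D)$-sized deterministic family is where the main technical work lies.
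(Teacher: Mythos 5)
Your plan correctly identifies the place where randomness is used and the high-level idea of replacing the random $G_i$'s with an explicit family computed locally from edge IDs, and you even put your finger on the central difficulty. But the specific notion you propose does not actually close the gap, and the fix you sketch does not work either. Your cut-universality condition --- shatter every $\lambda$-subset $E'$, i.e., realize every sign pattern on $\lambda$ edges --- is far too weak. As you yourself note at the end, what the correctness argument for Claim~\ref{cl:succprobcut} actually needs is a $G_i$ that simultaneously (i) \emph{includes} the entire $s$-$t$ shortest path $A = \pi(s,t,G\setminus F)$, of length $O(\lambda D)$, and (ii) \emph{excludes} the fault set $B = F$, of size $\leq \lambda$. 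Controlling the pattern on only $\lambda$ edges gives (ii) but says nothing about (i). Your proposed remedy --- base the family on $\Theta(\lambda + \log n)$-wise $\epsilon$-almost independent distributions --- cannot give (i) either: $k$-wise (almost) independence controls the joint distribution on at most $k$ coordinates, whereas $A$ has $\Theta(\lambda D)$ coordinates, so there is no guarantee any $G_i$ in such a family contains all of $A$. Demanding genuine control of all $\Theta(\lambda D)$ coordinates via ordinary universal sets or $\Theta(\lambda D)$-wise independence would blow the family size up to $2^{\Theta(\lambda D)}$, which is super-polynomial in $D$.

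The paper resolves this with an \emph{asymmetric} notion, $(n,a,b)$-FT-universal sets, with $a = O(\lambda D)$ and $b = \lambda$: for every disjoint $A$ (large) and $B$ (small), some $S$ in the family contains all of $A$ and misses all of $B$. This is strictly weaker than shattering $A \cup B$, and the asymmetry is what lets the family stay of size $\widetilde{O}(a^{O(1)+2b}) = \poly(D)$ for $\lambda = O(1)$. The explicit construction goes through $(n,k)$-perfect hash families $\mathcal{H} = \{h:[n] \to [2k^2]\}$ with $k = a+b$, themselves built from $\epsilon$-almost pairwise independent functions (Fact~\ref{fact:pihash}, Claim~\ref{cl:small-perfect}): given $A,B$, pick an $h$ injective on $A \cup B$, and take $S_{h,i_1,\dots,i_b} = \{\ell : h(\ell) \notin \{i_1,\dots,i_b\}\}$ with $i_j$ the hash values of $B$'s elements; then $B$ is excluded and $A$ is automatically included since $A$ and $B$ hash to disjoint buckets. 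That single trick --- exclude only the $b$ hash buckets of the small side, and let the large side ride along for free --- is the missing step in your proposal.
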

\end{mdframed}
This derandomization technique can be used to derandomize all other algorithms that are based on the fault tolerant (FT) sampling technique (e.g., \cite{dinitz2011fault},\cite{weimann2013replacement}), and it is therefore of interest also for centralized algorithms. Independently of our work, Alon, Chechik and Cohen \cite{AlonCC19} also studied the derandomization of algorithms that are based on the FT-sampling approach, their solution is different than ours.

For a summary on the computation of small cuts, see Table \ref{table-results-cuts}. 
\begin{table}[h]
\centering
\begin{tabular}{|c|c|c|c|}
	\hline & \textbf{Min-Cut Value $\lambda$} 	&  \textbf{\#Rounds} & \textbf{Type}  \\  \hline
	{\begin{tabular}[c]{@{}c@{}}
			Pritchard \& Thurimella \cite{pritchard2011fast}
	\end{tabular}}
	& $2$ Edges & $O(D)$ & {Rand.}   \\ \hline
	{\begin{tabular}[c]{@{}c@{}}
				Pritchard \& Thurimella \cite{pritchard2011fast}
		\end{tabular}}
	& \begin{tabular}[c]{@{}c@{}}
		$1$ Vertex
	 \end{tabular}
	 & $O(D+\Delta)$ &	{Rand.} \\  \hline
  {This Work} & $O(1)$ Edges & $\poly(D)$ & {Det.} \\  \hline
  {This Work} & $O(1)$ Vertices & $\poly(D \cdot \Delta)$ & {Det.} \\  \hline
\end{tabular}
\caption{\label{table-results-cuts}State of the art results for exact distributed computation of small cuts.}
\end{table}

\paragraph{Computation of Edge-Connectivities.}
We then turn to consider the more general task of computing the edge connectivity of all graph edges, up to some constant bound $\lambda=O(1)$. For an edge $e=(u,v)$, the \emph{edge connectivity} of $e$ is the size of the $u$-$v$ minimum (edge) cut in $G$. In the output format for each edge $e=(u,v)$, its endpoints are required to learn the edge connectivity of $e$. 
Exact computation of all edge connectivites has been previously known only $\lambda \leq 2$ due to Pritchard and Thurimella \cite{pritchard2011fast}. They gave randomized algorithms for the case of $\lambda=1,2$ with round complexities of $O(D)$ and $O(D+\sqrt{n}\log^* n)$, respectively.

In this paper, we again take a completely different approach and show a \emph{deterministic} algorithm with $\poly(D)\cdot 2^{\sqrt{\log n\log\log n}}$ rounds for computing all edge connectivities up to constant value of $\lambda=O(1)$. Our algorithm is based on two tools: (1) \emph{low-congestion cycle cover} \cite{parter-yogevSODAa} and their distributed computation \cite{parter-yogevCycles} ; and (2) the derandomization of the FT-sampling approach.
%
\begin{mdframed}[hidealllines=false,backgroundcolor=white!25]
\vspace{-8pt}
\begin{theorem}\label{thm:edge-conn} For every $D$-diameter $n$-vertex graph $G=(V,E)$, w.h.p., the edge connectivity of all graphs edges up to $\lambda=O(1)$ can be computed in $\poly(D)\cdot 2^{O(\sqrt{\log n})}$ rounds. This algorithm can also be derandomized using $\poly(D)\cdot 2^{O(\sqrt{\log n\log\log n})}$ rounds.
\end{theorem}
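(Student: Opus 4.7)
The plan is to reduce the computation of all edge connectivities to many simultaneous single-source invocations of Theorem~\ref{thm:min-cut-rand}, parallelized via a low-congestion cycle cover. The key observation is that the algorithm of Theorem~\ref{thm:min-cut-rand} with source $u$ causes every vertex in the graph, and in particular every neighbor $v$ of $u$, to learn its $u$-$v$ minimum cut. Thus it suffices to simulate, in parallel over all edges $e = (u,v) \in E$, an execution of the FT-sampling algorithm with $u$ as source and $v$ as the distinguished target, and then let each endpoint announce $\lambda(u,v)$.

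The first step is to invoke the distributed low-congestion cycle cover construction cited in the introduction to obtain, in $\poly(D)\cdot 2^{O(\sqrt{\log n\log\log n})}$ rounds, a cover $\mathcal{C}$ in which every edge lies on at most $c = 2^{O(\sqrt{\log n\log\log n})}$ cycles, each of length at most $d = 2^{O(\sqrt{\log n\log\log n})}$. For each edge $e = (u,v)$ fix a cycle $C_e \in \mathcal{C}$ containing $e$; the path $C_e \setminus \{e\}$ then furnishes a dedicated short $u$-$v$ conduit in $G$ for the $(u,v)$-specific communication.

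The core simulation proceeds in $k = \poly(D)$ iterations. In iteration $i$ one samples a subgraph $G_i \subseteq G$ with the parameters of Theorem~\ref{thm:min-cut-rand}. Each edge $(u,v)$ needs a compact witness of the $u$-$v$ connectivity in $G_i$, namely a list of at most $\lambda$ edge-disjoint $u$-$v$ paths truncated at depth $O(\lambda D)$, or a certificate that fewer exist; since $\lambda = O(1)$ this has size $\poly(D,\log n)$ bits. Endpoint $u$ extracts the witness from a truncated BFS of $G_i$ rooted at $u$ and ships it to $v$ along $C_e \setminus \{e\}$. Since the cover has congestion $c$ and dilation $d$, a standard pipelining (or random-delay) schedule delivers all per-edge messages in $\poly(D)\cdot (c + d) = \poly(D)\cdot 2^{O(\sqrt{\log n\log\log n})}$ rounds per iteration, and the total across $k$ iterations is of the same form. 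After all iterations each endpoint applies the local decision rule of Theorem~\ref{thm:min-cut-rand} to determine $\lambda(u,v)$.

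For the deterministic variant, the random sampling in each iteration is replaced by the universal-set construction of Theorem~\ref{thm:min-cut-det}. Since this substitution is purely local (each vertex reads its sample membership from precomputed universal-set data) it preserves the round complexity, yielding the stated $\poly(D)\cdot 2^{O(\sqrt{\log n\log\log n})}$ bound. The main obstacle is the parallel BFS in the third step: one must argue that running $|E|$ simultaneous BFSs of depth $O(\lambda D)$ and pushing the results through the cycle cover as messages of size $\poly(D,\log n)$ does not inflate the congestion beyond $2^{O(\sqrt{\log n\log\log n})}$. This relies crucially on $\lambda$ being constant, both to keep the per-edge witness small and to ensure that each sample's BFS output can be summarized by a constant number of edge-disjoint paths.
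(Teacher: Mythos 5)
Your high-level plan misuses the cycle cover: you treat it as a routing fabric to ship the \emph{output} of per-source BFS computations, whereas the paper uses the cycles themselves as the \emph{content} of the certificates. Concretely, in your third step you ask every edge endpoint $u$ to run a truncated BFS of depth $O(\lambda D)$ in the sampled graph $G_i$ and then forward a witness to $v$ along $C_e\setminus\{e\}$. But the cycle cover only bounds the congestion of the \emph{delivery} step; it says nothing about the congestion incurred while computing the BFS trees. Running $\Theta(n)$ simultaneous BFS trees in $G_i$ requires each edge to carry $\Theta(n)$ distinct ``distance/parent'' messages per layer in the worst case, so this step alone needs $\widetilde\Omega(n)$ rounds regardless of what the cycle cover looks like. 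You flag this as ``the main obstacle,'' but it is not a gap that can be closed by a scheduling argument---it is a genuine $\Omega(n)$ bottleneck, and nothing in the proposal circumvents it. The round complexity you state ($\poly(D)\cdot 2^{O(\sqrt{\log n\log\log n})}$ even for the randomized case) is also weaker than the $\poly(D)\cdot 2^{O(\sqrt{\log n})}$ claimed in the theorem, which is a second sign that the route is off.

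The paper's algorithm avoids any per-source BFS entirely. In each iteration $i$ it samples $G_i$ and then runs the \emph{cycle cover algorithm on $G_i$} with depth parameter $D'=\Theta(\lambda D)$. By construction of the cycle cover, each edge $e=(u,v)$ learns, as part of the algorithm's native output, all cycles of $\mathcal{C}_i$ passing through $e$; a cycle through $e$ is precisely a short $u$-$v$ path in $G_i\setminus\{e\}$. The congestion guarantee ($2^{O(\sqrt{\log n})}$ cycles per edge, each of length $2^{O(\sqrt{\log n})}\cdot D'$) then directly bounds the amount of information each edge accumulates. Aggregating over $\poly(D)$ iterations and invoking the FT-sampling argument (Lemma~\ref{lem:bounded-length-connected} plus a union bound over triplets $\langle u,v,F\rangle$, exactly as in Claim~\ref{cl:succprobcut}) shows that $G_{u,v}=\bigcup_i \{C\in\mathcal{C}_i : e\in C\}$ is w.h.p.\ a $u$-$v$ connectivity certificate, and the $u$-$v$ min-cut is computed locally on $G_{u,v}$. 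So: the sampling happens \emph{before} the cycle cover call, the cycle cover is re-run per sample, and there is never a need for $n$ parallel BFS instances. If you want to repair your argument, this is the restructuring you need---replace ``BFS + route witness over a fixed cycle cover'' with ``compute a cycle cover of the sampled subgraph, whose per-edge output is the witness.''
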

\end{mdframed}
%
\vspace{-5pt}
\paragraph{Sparse Connectivity Certificates.} In the second part of the paper we consider the related problem of computing connectivity certificates. 
We first show that by a direct application of fault tolerant spanners, one can compute a $\lambda$-edge connectivity certificate with $O(\lambda n)$ edges using $\widetilde{O}(\lambda)$ rounds. This considerably improves and extends up on the previous constructions with $O(D)$ rounds that were limited \emph{only} for $\lambda \in \{2,3\}$. 
\begin{mdframed}[hidealllines=false,backgroundcolor=white!25]
\vspace{-8pt}
\begin{lemma} For every $\lambda \in \mathbb{N}_{\geq 1}$, there is a randomized algorithm that computes a $\lambda$ connectivity certificate with $O(\lambda n)$ edges in $O(\lambda \log^{1+o(1)} n)$ rounds, with high probability.
\end{lemma}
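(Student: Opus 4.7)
The plan is to reduce the certificate problem to the construction of fault-tolerant (FT) spanners, and then invoke a known distributed construction. Recall that an $f$-edge fault-tolerant spanner of $G$ is a subgraph $H \subseteq G$ such that for every edge set $F \subseteq E(G)$ with $|F|\leq f$, and every pair of vertices $u,v$, the subgraph $H\setminus F$ contains a $u$-$v$ path whenever $G\setminus F$ does. (Usually one also requires a bounded stretch, but for the certificate application only the connectivity-preservation property under faults is needed.)

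First, I would prove the reduction: if $H$ is a $(\lambda-1)$-edge FT spanner of $G$, then $H$ is $\lambda$-edge connected iff $G$ is. By Menger's theorem, $G$ is not $\lambda$-edge connected iff there exists an edge set $F\subseteq E(G)$ with $|F|\leq \lambda-1$ such that $G\setminus F$ is disconnected. The FT spanner property applied to such $F$ guarantees that $H\setminus F$ is disconnected as well (and conversely, since $H\subseteq G$, any cut of size at most $\lambda-1$ in $H$ is also a cut in $G$ of the same size). Hence $G$ and $H$ agree on $\lambda$-edge-connectivity, so $H$ qualifies as a $\lambda$-sparse certificate provided $|E(H)|=O(\lambda n)$.

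Second, I would invoke an existing distributed construction of $(\lambda-1)$-edge FT spanners with $O(\lambda n)$ edges and round complexity $O(\lambda \log^{1+o(1)} n)$ in \congest. Such constructions are known (e.g., based on the Dinitz--Krauthgamer framework and its distributed implementations via iterated cluster sampling), and they yield exactly the bounds claimed: $O(\lambda n)$ edges and a round complexity that is near-linear in $\lambda$ up to a sub-polylogarithmic overhead. Applying this construction to $G$ directly produces the desired certificate $H$, and each endpoint of every edge locally learns whether that edge was selected into $H$, which is the standard output format.

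The main ``obstacle'' is not mathematical but bibliographic: one must identify a distributed FT spanner construction whose round complexity matches $O(\lambda \log^{1+o(1)} n)$ and whose output has $O(\lambda n)$ edges. Once such a construction is cited, the lemma follows immediately from the reduction above, with no further work. A minor subtlety is that some FT spanner constructions are stated for vertex faults or with a specific stretch; one should verify that the edge-fault variant suffices, which it does since the reduction only needs connectivity to be preserved under edge failures, a strictly weaker requirement than the usual distance preservation guarantee.
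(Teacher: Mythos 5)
Your reduction (a $(\lambda-1)$-edge FT spanner $H\subseteq G$ is a $\lambda$-edge connectivity certificate, since any cut of size $<\lambda$ in $H$ would also cut $G$, and conversely the FT property preserves connectivity under any $\leq\lambda-1$ edge failures) is exactly the paper's observation on certificates from FT spanners, and the overall structure --- reduce to FT spanners, then invoke a distributed construction --- matches the paper. However, the step you dismiss as ``bibliographic'' is where the paper makes a concrete and important choice, and your suggested reference is not the right tool. The paper instantiates the FT spanner via the Chechik et al.\ \emph{edge-fault} transform (their Fact in the paper): run a base spanner algorithm $\lambda$ times sequentially, each time on the graph with previously selected edges removed; this multiplies both the edge count and the round complexity by $\lambda$. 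The base algorithm is Pettie's ultra-sparse spanner, which gives $O(n)$ edges and stretch $O(2^{\log^* n}\log n)$ in $O(\log^{1+o(1)} n)$ rounds, so the product yields $O(\lambda n)$ edges and $O(\lambda\log^{1+o(1)} n)$ rounds exactly as claimed. The Dinitz--Krauthgamer framework you name is for \emph{vertex} faults and incurs a strictly worse-than-linear dependence on the fault parameter --- indeed, the paper's own vertex-certificate bound from that framework is $O(\lambda^2\log n)$ edges and $O(\lambda^3\log^{2+o(1)} n)$ rounds, which would not give the stated lemma. So the high-level argument is the same, but to actually land on the linear-in-$\lambda$ bounds you must use the simpler Chechik et al.\ edge-fault reduction, not the vertex-fault machinery.
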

\end{mdframed}
By plugging in the recent \emph{deterministic} spanner construction of \cite{GhaffariK18}, one can compute $\lambda$-edge connectivity certificate deterministically with $\widetilde{O}(\lambda \cdot n)$ edges and $\lambda \cdot 2^{O(\sqrt{\log n})}$ rounds\footnote{Combining the recent result of \cite{detND-Arxiv} with \cite{GhaffariK18} seems to improve the deterministic spanner construction to $\poly \log n$ rounds, and thus provide $\widetilde{O}(\lambda)$-round algorithm for $\lambda$-sparse certificates.}. This answers the open problem raised by Dory \cite{Dory18} concerning the existence of efficient deterministic constructions of connectivity certificates.  

To avoid the dependency in $\lambda$ in the round complexity, we use the well known Karger's edge-sampling technique, and show:
\begin{mdframed}[hidealllines=false,backgroundcolor=white!25]
\vspace{-8pt}
\begin{lemma} For every $\lambda$-connected graph and $\epsilon \in (0,1)$, there is a randomized distributed algorithm that computes a $(1-\epsilon)\lambda$ connectivity certificate with $O(\lambda n)$ edges in $O(1/\epsilon^2 \cdot\log^{2+o(1)} n)$ rounds, with high probability. 
\vspace{-3pt}
\end{lemma}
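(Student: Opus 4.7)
The plan is to reduce the task to the previous lemma applied, in parallel, on $T = \Theta(\lambda\epsilon^2/\log n)$ edge-disjoint subgraphs of $G$, each of small connectivity $k = \Theta(\log n/\epsilon^2)$, produced by a uniform random partition of $E(G)$; the resulting $T$ sparse certificates are then unioned. Concretely, set $k = c\log n/\epsilon^2$ with constant $c$ supplied by Karger's uniform edge-sampling theorem, $T=\lfloor \lambda/k\rfloor$, and $k' = \lceil (1-\epsilon)\lambda/T \rceil$. If $T\le 1$ (i.e., $\lambda = O(\log n/\epsilon^2)$), the previous lemma applied directly with parameter $\lceil (1-\epsilon)\lambda\rceil$ already runs in $O(\lambda\log^{1+o(1)} n) = O(\epsilon^{-2}\log^{2+o(1)} n)$ rounds and yields the desired certificate, so henceforth assume $T\geq 2$.

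First I would have each edge $e=(u,v)$ choose a bucket $i(e)\in [T]$ uniformly at random, with the two endpoints agreeing via a shared hash of the edge identifier (or by letting the smaller-ID endpoint draw $i(e)$ and announce it in one round). Writing $G_i=(V,E_i)$ for the bucket graphs, I would invoke Karger's sampling theorem: since each $G_i$ is a Bernoulli-$(1/T)$ sample of $E(G)$ with $\lambda/T \geq k = c\log n/\epsilon^2$, with high probability every cut of size $c$ in $G$ has between $(1-\epsilon)c/T$ and $(1+\epsilon)c/T$ edges in each $G_i$; in particular each $G_i$ is $k'$-edge-connected.

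Next I would invoke the previous lemma on each $G_i$ with connectivity parameter $k'$, in parallel, obtaining sparse certificates $H_i\subseteq G_i$ that are $k'$-edge-connected with $|H_i|=O(k' n)$ edges. The key observation for parallelism is that since $\{E_i\}$ partitions $E(G)$, every CONGEST edge serves at most one invocation, so the $T$ instances coexist without increasing the per-edge bandwidth load, and the combined round count is that of a single invocation, namely $O(k'\log^{1+o(1)} n) = O(\epsilon^{-2}\log^{2+o(1)} n)$. The output is $H=\bigcup_i H_i$. Edge-disjointness of the $H_i$'s ensures that every cut $(S,\bar S)$ carries at least $Tk' \geq (1-\epsilon)\lambda$ edges of $H$, so $H$ is $(1-\epsilon)\lambda$-edge-connected, while $|H|=\sum_i|H_i|=O(Tk'n)=O(\lambda n)$.

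The main obstacle is rigorously justifying the parallel step: one must check that running $T$ CONGEST algorithms simultaneously does not inflate the round complexity. The key point is that the $E_i$'s are edge-disjoint, so algorithm $i$'s messages travel only along $E_i$-edges, and each edge's $O(\log n)$-bit-per-round budget serves a single instance. A secondary concern is Karger's union bound over the exponentially many cuts of $G$, which is handled by the classical $n^{O(\alpha)}$-bound on the number of $\alpha$-approximate minimum cuts together with a standard Chernoff estimate on each cut.
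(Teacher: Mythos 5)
Your proposal is correct and follows essentially the same route as the paper: randomly partition $E(G)$ into $\Theta(\lambda\epsilon^2/\log n)$ edge-disjoint buckets, invoke Karger's uniform-sampling theorem to argue that each bucket graph retains connectivity $\Theta(\log n /\epsilon^2)$, run the base sparse-certificate algorithm in parallel on the edge-disjoint buckets (so per-edge bandwidth is not inflated), and output the union. The one place you diverge is the final correctness step: you show $(1-\epsilon)\lambda$-connectivity of $H$ by directly counting cut edges, $\mathrm{cut}_H(S,\bar S)=\sum_i \mathrm{cut}_{H_i}(S,\bar S)\geq Tk'\geq(1-\epsilon)\lambda$, whereas the paper argues via pigeonhole that any fault set $F$ of size under $(1-\epsilon)\lambda$ leaves some bucket $G_i$ with fewer than $\lambda'$ faults, hence $H_i\setminus F$ stays connected. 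The two arguments are interchangeable; your cut-counting version is a touch cleaner since it sidesteps the off-by-one care needed in the pigeonhole/fault argument.
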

\end{mdframed}
Table \ref{table-results-certificates-new} summarizes the state of the art. Note that if one uses fault tolerant spanners resilient for \emph{vertex} faults, we get $\widetilde{O}(\lambda)$-round algorithm for computing $\lambda$-vertex-certificates\footnote{I.e., a subgraph $H \subseteq G$ satisfying that $H$ is $\lambda$-vertex connected iff $G$ is $\lambda$-vertex connected.} with $O(\lambda^2 n)$ edges. For clarity of presentation, we mainly focus on the edge-connectivity certificates, but our results naturally extend to the vertex case as well. 

\begin{table}[h]
\centering
\begin{tabular}{|c|c|c|c|c|}
	\hline & \textbf{Edge Connectivity $\lambda$} 	& \textbf{Certificate Size} & \textbf{\#Rounds} & \textbf{Type}  \\  \hline
	{\begin{tabular}[c]{@{}c@{}}
			Thurimella  \cite{thurimella1997sub}
	\end{tabular}}
	& Any & $\lambda\cdot n$ & $\widetilde{O}(\lambda \cdot (D+\sqrt{n}))$ & {Det.}   \\ \hline
	{\begin{tabular}[c]{@{}c@{}}
				Pritchard \& Thurimella \cite{pritchard2011fast}
		\end{tabular}}
	& \begin{tabular}[c]{@{}c@{}}
		$2$
	 \end{tabular}
	 & $2n$ & $O(D)$ &	{Rand.} \\  \hline
		{\begin{tabular}[c]{@{}c@{}}
				Dory \cite{Dory18}
		\end{tabular}}
	& \begin{tabular}[c]{@{}c@{}}
		$3$
	 \end{tabular}
	 & $O(n\log n)$ & $O(D\cdot \log^3 n)$ &	{Rand.} \\  \hline	
  {This Work} & Any & $O(\lambda n)$ & $O(\lambda \cdot \log^{1+o(1)}n)$ & {Rand.} \\  \hline
  {This Work}	& \begin{tabular}[c]{@{}c@{}}
  	Approx. $(1-\epsilon)$
  \end{tabular}  & $O(\lambda \cdot n)$ & $O(\log^{2+o(1)}n)$ & {Rand.} \\  \hline
	{This Work}	& Any  & $\widetilde{O}(\lambda \cdot n )$ & $\lambda \cdot 2^{O(\sqrt{\log n\log\log n})}$ & {Det.} \\  \hline
\end{tabular}
\caption{State of the art result for distributed computation of sparse connectivity certificates. \label{table-results-certificates-new}}
\end{table}

\paragraph{Graph Notation.} For a subgraph $G'\subseteq G$ and $u,v \in V(G')$, let $\pi(u,v,G')$ be the unique\footnote{Ties are broken is a consistent manner.} shortest-path in $G'$. When $G'$ is clear from the context, we may omit it and simply write 
$\pi(u,v)$. For $u,v \in G$, let $\dist(u,v,G)$ be the length of the shortest $u$-$v$ path in $G$.
For a vertex pair $s,t$ and subgraph $G' \subseteq G$, let $\lambda(s,t,G')$ be the $s$-$t$ cut in $G'$. 
\dnsparagraph{The Communication Model}
We use a standard message passing model, the
\congest\ model \cite{Peleg:2000}, where the execution proceeds in synchronous 
rounds and in each round, each node can send a message of size $O(\log n)$ to 
each of its neighbors. 

\vspace{-15pt}\section{Exact Computation of Small Cuts} 
Throughout, we consider unweighted multigraphs with diameter $D$, and (edge or vertex) connectivity at most $\lambda=O(1)$. Before presenting the algorithm we start by considering the following  simpler task.

\dnsparagraph{Warm Up: Cut Verification}
In the cut verification problem, one is given a subset of edges $E'$ where $|E'|\leq \lambda$, it is then required to test if $G \setminus E'$ is connected. As we will see there is a simple algorithm for this problem which is based on the following key lemma.
\vspace{-5pt}
\begin{lemma}\label{lem:bounded-length-connected}
Consider a $D$-diameter unweighted graph $G=(V,E)$ with maximum degree $\Delta$. 
(1) If $u,v \in V$ are $\lambda$-edge connected\footnote{Note that we do not require the graph $G$ to be $\lambda$ connected.} (i.e., the $u$-$v$ cut is at least $\lambda$) then $\dist(u,v, G \setminus F)\leq c\cdot\lambda \cdot D$ for every edge sequence $F \subseteq E$, $|F|\leq \lambda-1$ for some constant $c$. \\
(2) If $u,v \in V$ are $\lambda$-vertex connected then $\dist(u,v, G \setminus F)\leq c \cdot \lambda \cdot\Delta \cdot D$ for every vertex sequence $F \subseteq V$, $|F|\leq \lambda-1$. 
\end{lemma}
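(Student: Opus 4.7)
My plan is to use a BFS tree rooted at $u$ in $G$ as a scaffold and argue that removing few failures produces only a handful of subtree components, each of small diameter, which can then be stitched together using a bounded number of surviving non-tree edges.

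Concretely, for part~(1) I would fix $T$ to be a BFS tree rooted at $u$ in $G$; by the diameter hypothesis, $T$ has depth at most $D$. Let $F_T := F \cap E(T)$, so $|F_T| \leq |F| \leq \lambda-1$. Removing $F_T$ from $T$ yields a forest with at most $|F_T|+1 \leq \lambda$ components $C_1,\dots,C_k$. Since each $C_i$ is a subtree of a depth-$D$ tree, its internal tree-diameter is at most $2D$; and because $T \setminus F_T \subseteq G \setminus F$, any two vertices of the same $C_i$ are at distance at most $2D$ in $G\setminus F$.

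The hypothesis that $u,v$ are $\lambda$-edge connected in $G$, combined with $|F|\leq \lambda-1$, implies that $u$ and $v$ lie in the same connected component of $G\setminus F$; this is the only place the connectivity assumption is used. Hence any $u$-$v$ walk in $G\setminus F$ projects to a walk in the quotient (multi)graph on $\{C_1,\dots,C_k\}$ whose edges are the inter-component edges surviving in $G\setminus F$. After shortcutting repeated components, the projected walk has at most $k-1 \leq \lambda-1$ hops. Lifting it back, I would route each stay inside a component through $T\setminus F_T$ (cost at most $2D$) and each inter-component hop via one surviving edge (cost $1$), producing a $u$-$v$ path in $G\setminus F$ of length at most $k\cdot 2D + (k-1) = O(\lambda D)$, which gives a small constant such as $c \leq 3$.

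For part~(2), the same scaffolding works with $F \subseteq V$. The only change is bounding the number of subtree components: each removed vertex $f$ has tree-degree at most $\Delta$, so deleting it from its current subtree increases the component count by at most $\Delta-1$. Summing over $F$ gives at most $1+|F|(\Delta-1) = O(\lambda\Delta)$ components, and the same routing argument yields a $u$-$v$ path of length $O(\lambda\Delta D)$ in $G\setminus F$. The step I expect to require the most care is the shortcutting-and-lifting argument: one must verify that shortcutting a walk in the component quotient preserves the property that consecutive (distinct) components are joined by a concrete surviving edge of $G\setminus F$, and that the within-component routes are realized by tree edges that genuinely escape $F$. Both facts hold by construction, but they deserve explicit bookkeeping to legitimately yield the $O(\lambda D)$ and $O(\lambda\Delta D)$ bounds.
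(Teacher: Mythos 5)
Your proof is correct and relies on the same core decomposition as the paper: take a BFS tree $T$ of depth $O(D)$, observe that $T\setminus F$ splits into at most $\lambda$ (resp.\ $O(\lambda\Delta)$) subtree components each of diameter $O(D)$, and stitch a short $u$--$v$ path from these components. The only (minor) difference is in the stitching step: the paper bounds the length of a \emph{shortest} $u$--$v$ path $P_{u,v,F}$ in $G\setminus F$ by covering it with at most $\lambda$ subpaths whose endpoints lie in a common component (each such subpath has length $O(D)$ precisely because $P_{u,v,F}$ is shortest), whereas you instead build an explicit path by taking a simple path in the component-quotient multigraph and lifting each hop back through the tree, which avoids appealing to shortest-path optimality. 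Both are sound and give the same $O(\lambda D)$ and $O(\lambda\Delta D)$ bounds; your quotient-graph phrasing is arguably a bit more self-contained, but it is the same argument in essence.
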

\begin{proof}
Let $T$ be an arbitrary BFS tree in $G$ of diameter $O(D)$. We begin with (1). Fix a set of faults $F \subseteq E$, $|F| \leq \lambda-1$, and let $P_{u,v,F}$ be the $u$-$v$ shortest path in $G \setminus F$. Since $u$ and $v$ are $\lambda$-edge connected such path $P_{u,v,F}$ exists. We next bound the length of $P_{u,v,F}$. Consider the forest $T \setminus F$ which has at most $\lambda$ connected components $C_1,\ldots, C_\ell$. We mark each vertex on $P_{u,v,F}$ with its component ID in the forest $T \setminus F$. Note that all vertices in the same component are connected by a path of length $O(D)$ in $G \setminus F$. We can then traverse the path $P_{u,v,F}$ from $u$ and jump to the last vertex $u_1$ on the path (close-most to $v$) that belongs to the component of $u$. The length of this sub-path is $O(D)$ and using one connecting edge on $P_{u,v,F}$, we move to a vertex belonging to a new component in $T \setminus F$. Overall the path $P_{u,v,F}$ can be covered by $\lambda$ path segments $P_1,\ldots, P_{\ell}$ such that the endpoints of each segments are in the \emph{same} component in $T \setminus F$, each neighboring segments $P_i$ and $P_{i+1}$ are connected by an edge from $P_{u,v,F}$. Since each $|P_i|=O(D)$, we get that $|P_{u,v,F}|=O(\lambda \cdot D)$. 
Claim (2) follows the exact same argument with the only distinction is that when a vertex fails, the BFS tree might break up into $\Delta+1$ components. Thus, for a subset $F \subseteq V$ of vertices with $|F|\leq \lambda-1$, the tree $T$ breaks into $O(\lambda \cdot \Delta)$ components.
\end{proof}
This lemma immediately implies an $O(\lambda D)$-round solution for the cut verification task: build a BFS tree $T$ from an arbitrary source up to depth $O(\lambda \cdot D)$. Then $T$ is a spanning tree iff $E'$ does not disconnect the graph. \vspace{-7pt}
\begin{corollary}\label{cor:cut-verification}[Cut Verification]
Given a set of $\lambda$ edges $E'$. One can test if $E'$ is a cut in $G$ using $O(\lambda \cdot D)$ rounds. 
\end{corollary}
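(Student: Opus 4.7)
The plan is to reduce cut verification to a single truncated BFS in the surviving graph $G\setminus E'$. Recall that in the \congest\ model the endpoints of every edge in $E'$ already know whether their edge belongs to $E'$, so no global broadcast of $E'$ is needed; each vertex locally blacklists its incident $E'$-edges. I would then designate any fixed vertex $s$ (for example, the vertex of smallest identifier, found once in a preprocessing BFS of $G$ in $O(D)$ rounds) and run a BFS from $s$ in $G\setminus E'$, truncated at depth $h:=c\cdot \lambda\cdot D$ for the constant $c$ from Lemma~\ref{lem:bounded-length-connected}. The test will output ``$E'$ is not a cut'' iff every vertex of $V$ is reached by this truncated BFS.

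Correctness reduces to a slight strengthening of Lemma~\ref{lem:bounded-length-connected}(1). If $E'$ is not a cut, then $G\setminus E'$ is connected, so for every $v\in V$ there is an $s$-$v$ path $P_{s,v}$ in $G\setminus E'$. Removing the $\lambda$ edges $E'$ from a fixed BFS tree $T$ of $G$ (of depth $O(D)$) leaves at most $\lambda+1$ components, each of $G\setminus E'$-diameter $O(D)$. Repeating the segment-jumping argument in the proof of Lemma~\ref{lem:bounded-length-connected}, the path $P_{s,v}$ can be shortcut into at most $\lambda+1$ segments whose endpoints lie in a common component of $T\setminus E'$, with segments glued by single path edges. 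Hence $\dist(s,v,G\setminus E')=O(\lambda D)\le h$, so $v$ is reached by the truncated BFS. Conversely, if $E'$ is a cut, then some vertex $v$ lies in an $s$-less component of $G\setminus E'$, has infinite distance from $s$, and thus is never reached. The truncation depth $h$ therefore exactly discriminates the two cases.

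For the round count, building the truncated BFS of depth $h$ in $G\setminus E'$ takes $O(h)=O(\lambda D)$ rounds. To aggregate the global predicate ``every vertex was reached,'' I would piggyback a convergecast on a pre-computed BFS tree of $G$ itself (not of $G\setminus E'$, since that one may not span); each vertex forwards a single bit to its parent, and $s$ AND-reduces them in $O(D)$ rounds, then broadcasts the verdict in $O(D)$ rounds. The total is $O(\lambda D)$.

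The only non-routine point I expect is the passage from ``$|F|\le \lambda-1$'' in Lemma~\ref{lem:bounded-length-connected} to the required ``$|F|\le \lambda$'' regime: removing $\lambda$ tree edges splits $T$ into $\lambda+1$ rather than $\lambda$ components, so the argument reruns verbatim with one extra segment and only affects the hidden constant in $O(\lambda D)$. Everything else is a straightforward pipelining of BFS, convergecast, and broadcast.
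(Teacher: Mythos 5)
Your proposal is correct and matches the paper's approach essentially verbatim: the paper's proof is the single sentence ``build a [truncated] BFS tree $T$ from an arbitrary source up to depth $O(\lambda\cdot D)$ [in $G\setminus E'$]; then $T$ is a spanning tree iff $E'$ does not disconnect the graph,'' which is exactly your construction, and your convergecast/broadcast bookkeeping is the routine detail the paper leaves implicit. You are right to flag the off-by-one: Lemma~\ref{lem:bounded-length-connected}(1) is stated for $|F|\le\lambda-1$ while here $|E'|=\lambda$, and the paper invokes the lemma without comment; your observation that the segment-jumping argument applies verbatim to $|F|=\lambda$ (one extra component of $T\setminus E'$, absorbed into the constant, under the sole hypothesis that $G\setminus E'$ is connected rather than any pairwise $\lambda$-connectivity) is the correct and cleanest way to close that gap.
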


%
%
%
%
%
The following definition is useful for the description and analysis of our algorithm.
\begin{definition}[$(s,t)$ connectivity certificate]
Given a graph $G$ with minimum cut $\lambda$ and a pair of vertices $s$ and $t$, the $(s,t)$ \emph{connectivity certificate} is a subgraph $G_{s,t} \subseteq G$ satisfying that $s$ and $t$ are $\lambda$-connected in $G_{s,t}$ iff they are $\lambda$-connected in $G$.
\end{definition}\vspace{-7pt}
Whereas a-priori the size of the $s$-$t$ connectivity certificate, measured by the number of edges, might be $\Omega(n)$, as will show later on, it is in fact bounded by $(\lambda D)^{O(\lambda)}$, hence $\poly(D)$ for $\lambda=O(1)$. With this definition in mind, we are now ready to present the minimum cut algorithm.

\paragraph{A $\poly(D)$-Round Randomized Algorithm.}
The algorithm has two phases. In the first phase, every vertex $t$ computes its $(s,t)$ certificate subgraph $G_{s,t}$ w.r.t. a given fixed source $s$. In the second phase, each vertex $y$ locally computes its $s$-$t$ cut in the subgraph $G_{s,t}$, and one of the output $\lambda$-size cut is broadcast to the entire network. 
%
Throughout, we assume w.l.o.g. that the value of the minimum cut $\lambda$ is known, since $\lambda=O(1)$ this assumption can be easily removed.

The first phase has $\ell=O((\lambda D)^{2\lambda})$ iterations, or \emph{experiments}. 
In each iteration $i$, the algorithm samples a subgraph $G_i$ by including each edge $e \in G$ into $G_i$ independently with probability $p=1-1/(c(\lambda D))$ for some constant $c$ 
(taken from Lemma \ref{lem:bounded-length-connected}). 
For a source vertex $s$ (which is fixed in all iterations), a (truncated) BFS tree $B_i$ rooted in $s$ is computed in $G_i$ up to depth $c\cdot \lambda \cdot D$.
Next, every vertex in $B_i$ learns its tree path from $s$ by pipelining these edges downward the tree. This completes the description of an iteration. Let $G_{s,t}=\bigcup_{i=1}^{\ell} \pi(s,t,B_i)$.

In the second phase, every vertex $t$ locally computes its $s$-$t$ cut in the subgraph $G_{s,t}$.
The edges of the minimum cut are those obtained by one of the vertices $t$ whose $s$-$t$ connectivity in $G_{s,t}$ is at most $\lambda$. 

\dnsparagraph{Correctness} 
For the correctness of the algorithm it will be sufficient to show that w.h.p. $G_{s,t}$ is an $s$-$t$ connectivity certificate for every $t \in V$.
\begin{claim}\label{cl:succprobcut}
For every $t$, w.h.p., $G_{s,t}$ is an $s$-$t$ connectivity certificate.
\end{claim}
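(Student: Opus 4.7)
The plan is to show the two inclusions that define an $s$-$t$ certificate. One direction is immediate: since $G_{s,t}\subseteq G$, if $s,t$ are $\lambda$-connected in $G_{s,t}$ they are also $\lambda$-connected in $G$. For the non-trivial direction, by Menger's theorem, it suffices to show that every edge set $F\subseteq E(G)$ with $|F|\leq \lambda-1$ that is \emph{not} an $s$-$t$ cut in $G$ is also not an $s$-$t$ cut in $G_{s,t}$, i.e., that $s$ and $t$ remain connected in $G_{s,t}\setminus F$.

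Fix such an $F$. By Lemma \ref{lem:bounded-length-connected}, since $s$ and $t$ are $\lambda$-edge connected in $G$, there exists an $s$-$t$ path $P_F\subseteq G\setminus F$ of length at most $c\lambda D$. For iteration $i$, call $\mathcal{E}_i(F)$ the event that (a) $F\cap E(G_i)=\emptyset$ and (b) $P_F\subseteq E(G_i)$. Because $F$ and $P_F$ are edge-disjoint (as $P_F\subseteq G\setminus F$) and each edge is sampled independently, the two sub-events are independent, so
$$\Pr[\mathcal{E}_i(F)] \;\geq\; \left(\tfrac{1}{c\lambda D}\right)^{\lambda-1}\cdot\left(1-\tfrac{1}{c\lambda D}\right)^{c\lambda D} \;=\; \Omega\!\left(\tfrac{1}{(c\lambda D)^{\lambda-1}}\right).$$
Conditioned on $\mathcal{E}_i(F)$, one has $\dist(s,t,G_i)\leq c\lambda D$, so $t$ lies inside the truncated BFS tree $B_i$, and the tree path $\pi(s,t,B_i)\subseteq G_i$ is contained in $G_{s,t}$ and is disjoint from $F$. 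Hence $s$ and $t$ remain connected in $G_{s,t}\setminus F$, as required.

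Since iterations are independent, the probability that $\mathcal{E}_i(F)$ fails for every $i\in[\ell]$ is at most $\exp\!\bigl(-\Omega(\ell/(\lambda D)^{\lambda-1})\bigr)$, which for $\ell=O((\lambda D)^{2\lambda})$ and $\lambda=O(1)$ is at most $1/\poly(n)$ (with the hidden constant tunable to dominate the union bound below). The number of candidate sets $F$ with $|F|\leq \lambda-1$ is at most $\binom{|E|}{\lambda-1}\leq n^{O(\lambda)}=\poly(n)$, so a union bound over all $F$ and over all targets $t\in V$ shows that with high probability, every $F$ of size $<\lambda$ which is not an $s$-$t$ cut in $G$ fails to cut $s$ from $t$ in $G_{s,t}$. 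This yields the claim.

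The main delicate step is step (b)--(c): ensuring genuine independence between "$F$ avoided'' and "a short bypass path survives.'' This hinges on choosing the bypass path $P_F$ inside $G\setminus F$, so that the two edge sets are disjoint and the edge-sampling events factor; once this independence is in place, the rest reduces to a routine amplification and union bound driven by the fact that short paths of length $O(\lambda D)$ survive sampling with constant probability at density $p=1-1/(c\lambda D)$.
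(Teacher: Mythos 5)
Your proof is correct and follows essentially the same route as the paper's: fix a fault set $F$ not separating $s$ from $t$, define a per-iteration success event ``the short bypass path survives and $F$ is dropped,'' lower-bound its probability using independence of edge sampling, and then amplify over $\ell$ iterations and union-bound over all choices of $F$ and $t$. The only differences are cosmetic --- you invoke Menger explicitly and restrict to $|F|\leq\lambda-1$, while the paper quietly proves the slightly stronger statement for $|F|\leq\lambda$ (which it uses later, outside this claim).

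One small quantitative point, which you actually inherit from the paper's own statement of $\ell=O((\lambda D)^{2\lambda})$: the bound $\exp\bigl(-\Omega(\ell\,q)\bigr)=\exp\bigl(-\Omega((\lambda D)^{\lambda+1})\bigr)$ is not $1/\poly(n)$ when $D$ is very small (e.g.\ constant), and ``tuning the hidden constant'' cannot supply the missing $\Theta(\log n)$ factor; one really should take $\ell=\Theta\bigl((\lambda D)^{\lambda+1}\log n\bigr)$ (as the paper itself does in Lemma~\ref{lem:prob-approach-det-sample}) so that the failure probability is $n^{-\Omega(\lambda)}$ before the union bound over $n^{O(\lambda)}$ triplets. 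With that adjustment your argument is airtight.
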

\begin{proof}
Since $G_{s,t} \subseteq G$, it is sufficient to show that $s$ and $t$ are connected in $G_{s,t} \setminus F$ for any subset $F \subset E$, $|F|\leq \lambda$ satisfying that $s$ and $t$ are connected in $G \setminus F$. 
Fix such a triplet $\langle s,t,F \rangle$ where $s$ and $t$ are connected in $G \setminus F$. 
An iteration $i$ is \emph{successful} for $\langle s,t, F \rangle$ if 
$$\pi(s,t,G\setminus F)\subseteq G_i \mbox{~~and~~} F \cap G_i=\emptyset.$$
Note that if iteration $i$ is successful for $\langle s,t, F \rangle$, then the truncated BFS tree $B_i$ contains $t$ as $\dist(s,t,G_i)= |\pi(s,t,G\setminus F)|\leq c\cdot \lambda \cdot D$, where the last inequality is due to Lemma \ref{lem:bounded-length-connected}(1). In addition, since $F \cap G_i=\emptyset$, it also holds that $\pi(s,t,B_i)\subseteq G_{s,t} \setminus F$.

It remains to show that with probability at least $1-1/n^{\Omega(\lambda)}$, every triplet $\langle s,t, F \rangle$ where $s$ and $t$ are connected in $G \setminus F$, has at least one successful iteration. The claim will then follow by applying the union bound over all $n^{2\lambda}$ triplets. Recall that each edge is sampled into $G_i$ independently with probability $p$. Thus the probability that iteration $i$ is successful for $\langle s,t, F \rangle$ is at least: 
$$q=p^{(c\cdot \lambda D)}\cdot (1-p)^\lambda=1/(\lambda\cdot D)^{\lambda}.$$ 
Since there are $\ell$ independent experiments, the probability that all of them fail is $(1-q)^{\ell}\leq 1/n^{\Omega(\lambda)}$, the claim follows.
\end{proof}
Finally, let $t$ be a vertex such that $s$ and $t$ are not $(\lambda+1)$-connected in $G$. Thus, by the lemma above, $\lambda(s,t,G_{s,t})\leq \lambda$ and the minimum cut computation applied locally by vertex $t$ in $G_{s,t}$ outputs a subset $F$ of at most $\lambda$ edges. We claim that w.h.p. $s$ and $t$ are also not connected in $G \setminus F$. Assume otherwise, then by Lemma \ref{lem:bounded-length-connected}(1),  $\dist(s,t,G \setminus F)\leq c\cdot \lambda \cdot D$, thus by the argument above, w.h.p., there is an iteration $i$ in which an $s$-$t$ path that does not go through $F$ is taken into $G_{s,t}$, leading to a contradiction that $s$ and $t$ are disconnected in $G_{s,t}\setminus F$.  

%
\vspace{-7pt}
\begin{corollary}
For every $D$-diameter unweighted graph $G=(V,E)$, $\lambda\geq 1$ and vertex pair $s,t \in V$, there exists an $(s,t)$ certificate $G_{s,t} \subseteq G$ of size $(\lambda D)^{O(\lambda)}$.
\end{corollary}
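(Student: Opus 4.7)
The plan is to derive this corollary as a direct consequence of the randomized construction developed just above, via the probabilistic method. Given any $s,t\in V$, I would run the same sampling procedure from the $\poly(D)$-round algorithm: for $\ell = O((\lambda D)^{2\lambda})$ iterations, sample $G_i \subseteq G$ by keeping each edge independently with probability $p = 1 - 1/(c\lambda D)$, compute a truncated BFS tree $B_i$ rooted at $s$ in $G_i$ up to depth $c\lambda D$, and set
$$G_{s,t} \;=\; \bigcup_{i=1}^{\ell}\pi(s,t,B_i).$$

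First I would bound the size deterministically. Each path $\pi(s,t,B_i)$ is a root-to-descendant path in a BFS tree truncated at depth $c\lambda D$, so it contains at most $c\lambda D$ edges (if $t$ is not reached in $B_i$, the path is empty). Summing over all $\ell$ iterations yields
$$|G_{s,t}| \;\le\; \ell \cdot c\lambda D \;=\; O\!\bigl((\lambda D)^{2\lambda}\bigr)\cdot c\lambda D \;=\; (\lambda D)^{O(\lambda)},$$
matching the claimed bound. This holds for every realization of the random experiment.

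Next I would invoke Claim~\ref{cl:succprobcut}: with high probability the random $G_{s,t}$ is an $(s,t)$-connectivity certificate (the claim was proved for all $t\in V$ simultaneously, which is more than needed here). In particular the success probability is strictly positive, so at least one outcome of the random experiment produces a subgraph that is simultaneously a valid $(s,t)$-connectivity certificate and satisfies the size bound above. This yields the desired existence statement. I do not expect a genuine obstacle: the correctness ingredient is already encapsulated in Claim~\ref{cl:succprobcut}, and the only new content is the deterministic accounting of $|G_{s,t}|$ as a sum of $\ell$ paths of length at most $c\lambda D$. The corollary is essentially a combinatorial by-product of the algorithm, recording that the same construction certifies the existence of a small $(s,t)$-connectivity certificate independently of the distributed model.
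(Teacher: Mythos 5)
Your argument is correct and matches the paper's intended derivation: the corollary is a by-product of the randomized construction, with the size bound following from the deterministic count of $\ell=O((\lambda D)^{2\lambda})$ truncated-BFS paths of length at most $c\lambda D$, and correctness following from Claim~\ref{cl:succprobcut} plus the probabilistic method. The paper leaves this proof implicit, but your write-up is precisely the argument it relies on.
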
\vspace{-7pt}
\def\APPENDFIGCERT{
\begin{figure} 
	\begin{center}
		\includegraphics[scale=0.4]{st-certificate.pdf}
	\end{center}
	\caption{An illustration of $s$-$t$ certificate of size $\Omega(D^{\lambda})$. Shown is a pair of neighbors $s$ and $t$. Every edge in this graph lies on a cycle of length $\Theta(D)$. Thus, the removal, of each edge increases the length of the given shortest-path by an additive factor of $D$. Note that in this example, $\dist(s,t, G \setminus F)=\Theta(D \cdot \lambda)$ for every $|F|\leq \lambda$. However, in the collection of $\lambda$-edge disjoint $s$-$t$ paths the length of each path is $\Theta(D^{\lambda})$. The sparse certificate $G_{s,t} \subseteq G$ of $s$ and $t$ must include all those edges, leading to the size bound of $D^{O(\lambda})$ edges.}
\end{figure} \label{fig:st-certificate}
}

\dnsparagraph{Round Complexity} Each of the $\ell$ iterations takes $O(\lambda \cdot D)$ rounds for computing the truncated BFS tree. Learning the edges along the tree path from the root also takes $O(\lambda \cdot D)$ rounds via pipeline, thus overall the round complexity is $(\lambda \cdot D)^{(c+2)\lambda}=\poly(D)$ for $\lambda=O(1)$. 

\dnsparagraph{Extension to Vertex Cuts} The algorithm for computing vertex cuts is almost identical and requires  minor adaptations. First, instead of having a single source vertex $s$, we will pick $\lambda+1$ arbitrary sources $s_1,\ldots, s_{\lambda+1}$ and will run an algorithm, which is very similar to the one described above, with respect to each source $s_i$. Note that since the vertex cut has size $\lambda$, then there is at least one vertex cut $V' \subset V$ of size $\lambda$ that does not contain at least one of the sources $s_i$. In such a case, our algorithm will find the cut $V'$ when running the below mentioned algorithm w.r.t the source $s_i$. 

The algorithm for each source $s_i$ works in iterations, where each iteration $j$ samples into a subgraph $G_j$ a collection of vertices rather than edges. That is, the subgraph $G_j$ is defined by taking the induced graph on a sample of vertices, where each vertex gets sampled independently with probability 
$p'=(1-1/(c\lambda \cdot \Delta \cdot D))$ (the constant $c$ is taken from Lemma \ref{lem:bounded-length-connected}(2)). Then a BFS tree $B_j$ rooted at $s_i$ is computed in $G_j$ up to depth $c\lambda \cdot \Delta \cdot D$. Every vertex $v \in B_j$ collects its path from the root. Let $G_{s_i,t}$ be the union of all paths collected for each vertex $t$. In the second phase, the vertex $t$ computes locally the $s_i$-$t$ vertex-cut in $G_{s_i,t}$. The analysis is then identical to that of the edge case, where in particular, we get that w.h.p. $G_{s_i,t}$ is the vertex-connectivity certificate for every $t$.

\vspace{-10pt}
\subsection{Deterministic Min-Cut Algorithms via Universal Sets}\label{sec:det-cut}\vspace{-5pt}
Our goal in this section is to derandomize the FT-sampling technique by locally computing explicitly (at each node) a \emph{small} family of graphs $\mathcal{G}=\{G_i \subseteq G\}$ such that in iteration $i$, the vertices will apply the computation on the graph $G_i$ in the same manner as in the randomized algorithm. Here, however, the graph $G_i$ is not sampled but rather computed locally by all the vertices. The family of subgraphs $\mathcal{G}$ is required to satisfy the following crucial property for $a=c \cdot D \cdot \lambda$ and $b=\lambda$:
\begin{mdframed}[hidealllines=false,backgroundcolor=white!25]
For every two disjoint subsets of edges $A,B$ with $|A|\leq a$ and $|B|\leq b$, there exists a subgraph $G_i \in \mathcal{G}$ satisfying that:
\begin{equation}\label{eq:det-sample}
A \subseteq G_i\mbox{~~and~~} B \cap G_i=\emptyset~.
\end{equation}
\end{mdframed}
\vspace{-5pt} In our algorithms, the subset $B$ corresponds to a set of edge faults, and $A$ corresponds to an $s$-$t$ shortest path in $G \setminus B$. Thus, $|B| \leq \lambda$ and by Lemma \ref{lem:bounded-length-connected}, $|A|=O(\lambda \cdot D)$. 
We begin with the following observation that follows by the probabilistic method.
\begin{lemma}\label{lem:prob-approach-det-sample}
There exists a family of graphs $\mathcal{G}=\{G_i \subseteq G\}$ of size $O(a^{b+1}\cdot \log n)$ that satisfies Eq. (\ref{eq:det-sample}) for every disjoint $A,B\subseteq E$ with $|A|\leq a$ and $|B|\leq b$. 
\end{lemma}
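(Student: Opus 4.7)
The plan is to use the standard probabilistic method: sample $t$ subgraphs $G_1,\ldots,G_t$ independently, each by keeping every edge of $G$ with a well-chosen probability $p$, and show that for a suitable $p$ and $t = O(a^{b+1}\log n)$, with positive probability every disjoint pair $(A,B)$ with $|A|\le a, |B|\le b$ is covered by at least one $G_i$ in the sense of Eq.~(\ref{eq:det-sample}).

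First I would fix $p := 1 - 1/a$. For a fixed disjoint pair $(A,B)$ with $|A|\le a$, $|B|\le b$, the events ``$A\subseteq G_i$'' and ``$B\cap G_i=\emptyset$'' depend on disjoint sets of edges and are therefore independent, so
$$\Pr[A\subseteq G_i \;\wedge\; B\cap G_i=\emptyset] \;=\; p^{|A|}(1-p)^{|B|} \;\ge\; \bigl(1-\tfrac{1}{a}\bigr)^{a}\cdot\bigl(\tfrac{1}{a}\bigr)^{b} \;\ge\; \frac{1}{2e\cdot a^{b}} \;=:\; q,$$
valid for $a\ge 2$. This choice of $p$ is the heart of the argument: pushing $p$ close to $1$ keeps the $p^{|A|}$ factor bounded below by a constant while making the $(1-p)^{|B|}$ factor decay only as $a^{-b}$ rather than catastrophically.

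Next I would apply a union bound. The number of ordered disjoint pairs $(A,B)$ of edge sets with $|A|\le a$ and $|B|\le b$ is at most $\binom{m}{a}\binom{m}{b} \le m^{a+b}\le n^{2(a+b)}$, where $m=|E(G)|$. With $t$ independent samples, the probability that some such pair is missed by \emph{all} $t$ subgraphs is at most
$$n^{2(a+b)}\cdot (1-q)^{t} \;\le\; \exp\!\bigl(2(a+b)\ln n \;-\; tq\bigr).$$
Choosing $t = C\cdot a^{b}(a+b)\log n$ for a sufficiently large constant $C$ makes the right-hand side strictly less than $1$. Hence there exists a realization of the $t$ samples in which every disjoint $(A,B)$ is covered by at least one $G_i$. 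In the regime relevant to the application, $a = c\lambda D \ge \lambda = b$, so $a+b = O(a)$ and $t = O(a^{b+1}\log n)$, matching the claimed bound.

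There is no genuine obstacle: once the probability $p$ is tuned to $1-1/a$ the remainder is a routine union bound. The conceptual content of the lemma is merely that a non-constructive family of the stated size exists; the interesting work, which will presumably follow in the next subsection, is to replace this probabilistic argument by an explicit (locally computable) construction of such a family, so as to actually derandomize the FT-sampling algorithm of Theorem~\ref{thm:min-cut-rand}.
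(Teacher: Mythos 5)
Your proof is correct and follows essentially the same route as the paper's: sample each edge with probability $p=1-1/a$, lower-bound the per-trial success probability by $\Theta(1/a^b)$, and union-bound over all pairs $(A,B)$; the only cosmetic differences are that you track the $(a+b)$ term in the union bound explicitly (and then note $a\ge b$ in the application) whereas the paper bounds the number of pairs by $n^{2a}$ directly, and you give a slightly more careful lower bound $(1-1/a)^a\ge 1/(2e)$ where the paper writes $1/e$.
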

\begin{proof}
We will show that a random family $\mathcal{G}_R$ with $\ell=O(a^{b+1}\cdot \log n)$ subgraphs satisfies Eq. (\ref{eq:det-sample}) with non-zero probability. Each subgraph $G_i$ in $\mathcal{G}_R$ is computed by sampling each edge in $G$ into $G_i$ with probability of $p=(1-1/a)$. 

The probability that $G_i$ satisfies Eq. (\ref{eq:det-sample}) for a fixed set $A$ and $B$ of size at most $a$ and $b$ (respectively) is $q=p^a\cdot (1-p)^b=1/(e \cdot a^b)$. The probability that none of the subgraph $G_i$ satisfy   Eq. (\ref{eq:det-sample}) for $A,B$ is at most $(1-q)^{c \cdot a^{b+1}\cdot \log n}\leq 1/n^{3a}$ for some constant $c\geq 5e$. Thus, by applying the union bound over all $n^{2a}$ possible subsets of $A,B$, we get that $\mathcal{G}_R$ satisfies Eq. (\ref{eq:det-sample}) for all subsets with positive probability. The lemma follows.
\end{proof}
Lemma \ref{lem:prob-approach-det-sample} already implies a deterministic minimum cut algorithm with $\poly(D)$ rounds, in case where nodes are allowed to perform \emph{unbounded} local computation. Specifically, let every node compute locally, in a brute force manner, the family of graphs $\mathcal{G}=\{G_i \subseteq G\}$ of size $a^{b+1}\cdot \log n$. In each iteration $i$ of the minimum-cut computation, nodes will use the graph $G_i \in \mathcal{G}$ to compute the truncated BFS tree, and collect their tree paths in these trees. Although the \congest\ model does allow for an unbounded local computation, it is still quite undesirable. To avoid this, we next describe an explicit \emph{polynomial} construction of the graph family $\mathcal{G}$. This explicit construction is based on stating our requirements in the language of universal sets. 

\dnsparagraph{Universal Sets} A family of sets $\mathcal{S}=\{S \subseteq [n]\}$ is $(n,k)$-\emph{universal} if every subset $S' \subseteq [n]$ of $|S'|=k$ elements is \emph{shattered} by $\mathcal{S}$. That is, for each of the $2^k$ subsets $S'' \subseteq S'$ there exists a set $S \in \mathcal{S}$ such that $S'\cap S=S''$. Using linear codes, one can compute $(n,k)$-universal sets with $n^{O(k)}$ subsets. Alon \cite{alon1986explicit} showed an explicit construction of size $2^{O(k^4)}\log n$ using the Justesen-type codes  constructed by Friedman \cite{friedman1984constructing}. 
In our context, the parameter $n$ corresponds to the number of graph edges, and each subset is a subgraph. The parameter $k$ corresponds to the bound on the length of the path which is $O(\lambda \cdot D)$. Using the existing constructions lead to a family with $2^{\lambda \cdot D}$ subgraphs which is unfortunately super-linear, already for graphs of logarithmic diameter. 

\dnsparagraph{A New Variant of Universal Sets} We define a more relaxed variant of universal sets, for which a considerably improved size bounds can be obtained. In particular, for our purposes it is not really needed to fully shatter subsets of size $k$. Instead, for every set $S'$ of $k$ elements we would like that for every small subset $S''\subseteq S$, $|S''|\leq b$ (which plays the role of the faulty edges),  there will be a set $S$ in the family satisfying that $S' \cap S=S'\setminus S''$. 
We call this variant FT-universal sets, formally defined as follows.
%
\begin{definition}[FT-Universal Sets]
For integers $n,a,b$ where $a\leq b \leq n$, a family of sets $\mathcal{S}=\{ S \subset [1,n]\}$ is $(n,a,b)$-\emph{universal} if for every two disjoint subsets $A \subset [1,n]$ and $B \subset [1,n]$ where $|A|\leq a$ and $|B|\leq b$, there exists a set $S \in \mathcal{S}$ such that (1) $A \subseteq S$ and (2) $B \cap S=\emptyset$.
%
\end{definition}\vspace{-3pt}
Our goal is compute a family of $(n,a,b)$-universal sets of cardinality $\widetilde{O}(a^{2b+O(1)})$ in time $\poly(a^{O(1)+2b} \cdot n)$. 
Towards that goal we will use the notion of perfect hash functions.
\begin{definition}[Perfect Hash Functions]
For integers $n$ and $k<n$ a family of hash functions $\mathcal{H}=\{h: [n] \to [\ell]\}$ is \emph{prefect} if for every subset $S \subset [1,n]$ for $|S| \leq k$, there exists a function $h \in  \mathcal{H}$ such that
$h(i) \neq h(j), ~~~\forall i, j \in S, i \neq j.$
\end{definition}
\vspace{-10pt}
\begin{definition}[Almost Pairwise Independence]\label{def:almost}
A family of functions $\mathcal{H}$ mapping domain $[n]$ to range $[m]$ is $\epsilon$-almost pairwise independent if for every $x_1 \neq x_2 \in [n]$, $y_1,y_2 \in [m]$, we have:
$\Pr[h(x_1)=y_1 \mbox{~and~} h(x_2)=y_2]\leq (1+\epsilon)/m^2$.
\end{definition}
\begin{fact}[\cite{TCS-010,celis2013balls}]\label{fact:pihash}
For every $\alpha,\beta \in \mathbb{N}$ and constant $\epsilon \in (0,1)$, one can compute in $\poly(\alpha \cdot 2^{\beta}\cdot 1/\epsilon)$ an explicit family of $\epsilon$-almost pairwise independent hash functions $\mathcal{H}_{\alpha,\beta}=\{h: \{0,1\}^\alpha \to \{0,1\}^\beta\}$ that contains $\poly(\alpha \cdot 2^{\beta}\cdot 1/\epsilon)$ functions. 
\end{fact}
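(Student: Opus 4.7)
The plan is to reduce the construction of $\epsilon$-almost pairwise independent hash functions to an explicit construction of $\epsilon'$-biased sample spaces, and then invoke the standard Fourier-analytic connection between small bias and almost pairwise independence.

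First I would parameterize each candidate hash function as an affine map $h_{A,b}(x) = Ax + b$ over $\mathbb{F}_2$, where $A$ is a $\beta \times \alpha$ matrix and $b \in \mathbb{F}_2^\beta$, so a single function is described by $n := \beta(\alpha+1)$ bits. Uniformly random $(A,b)$ would give a perfectly pairwise independent family, but this requires $2^n$ functions, which is too many. Instead I would draw the seed $(A,b)$ from an explicit $\epsilon'$-biased distribution $\mathcal{D}$ over $\{0,1\}^n$, with $\epsilon' := \epsilon/2^{2\beta}$.

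Next I would verify that this choice yields $\epsilon$-almost pairwise independence. Fix $x_1 \neq x_2 \in \{0,1\}^\alpha$ and $y_1, y_2 \in \{0,1\}^\beta$. The event $h(x_1) = y_1 \wedge h(x_2) = y_2$ is a conjunction of $2\beta$ affine constraints on the bits of $(A,b)$; because $x_1 \neq x_2$, a direct check shows that these $2\beta$ linear forms on $(A,b)$ are linearly independent. Expanding the indicator of the conjunction as a Fourier expansion over $\mathbb{F}_2^n$ produces $2^{2\beta}$ character terms, each scaled by $2^{-2\beta}$. Taking expectation under $\mathcal{D}$, the trivial character contributes the main term $2^{-2\beta}$, while every other character corresponds to a nontrivial $\mathbb{F}_2$-linear form in $(A,b)$ and so has expectation at most $\epsilon'$ in absolute value by the $\epsilon'$-bias of $\mathcal{D}$. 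Summing over the $2^{2\beta}-1$ nontrivial characters yields $|\Pr[h(x_1) = y_1 \wedge h(x_2) = y_2] - 2^{-2\beta}| \leq \epsilon' = \epsilon/2^{2\beta}$, which gives the one-sided $(1+\epsilon)/2^{2\beta}$ bound of Definition~\ref{def:almost}.

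Finally, I would invoke a known explicit polynomial-time construction of $\epsilon'$-biased sample spaces on $n$ bits of size $\poly(n/\epsilon')$ (e.g.\ Naor--Naor, or Alon--Goldreich--Hastad--Peralta). Substituting $n = \beta(\alpha+1)$ and $\epsilon' = \epsilon/2^{2\beta}$ gives a family of size and construction time both $\poly(\alpha\beta \cdot 2^{2\beta}/\epsilon) = \poly(\alpha \cdot 2^\beta / \epsilon)$, exactly as claimed. The delicate step is ensuring the Fourier calculation is tight in both directions, so that $\epsilon'$ really can be pushed down to $\epsilon/2^{2\beta}$ without hidden slack; any looseness in the bias-to-almost-independence reduction would force a smaller $\epsilon'$ and inflate the final family size beyond the stated $\poly(\alpha \cdot 2^\beta \cdot 1/\epsilon)$ bound.
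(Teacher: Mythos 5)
The paper does not prove this statement; it records it as a Fact with citations to the Vadhan pseudorandomness survey and Celis et al.\ and gives no internal argument, so there is nothing in the paper to compare against. Your derivation via an $\epsilon'$-biased distribution over the seed $(A,b)$ of the affine map $x \mapsto Ax+b$ is correct and is essentially the construction those references use: linear independence of the $2\beta$ constraint forms (your use of $x_1 \neq x_2$ is exactly where this is needed) makes every nontrivial Fourier character a nontrivial linear test on the seed, the bias $\epsilon' = \epsilon/2^{2\beta}$ then yields $\bigl|\Pr[h(x_1)=y_1 \wedge h(x_2)=y_2] - 2^{-2\beta}\bigr| \leq \epsilon'$ by the triangle inequality over the $2^{2\beta}-1$ nontrivial terms, and the resulting family/seed size $\poly\bigl(\alpha\beta\cdot 2^{2\beta}/\epsilon\bigr)$ collapses to $\poly(\alpha\cdot 2^\beta/\epsilon)$ since $\poly(\cdot)$ absorbs the squaring. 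Your closing worry about tightness is not actually a concern: Definition~\ref{def:almost} asks only for the one-sided upper bound, which the two-sided Fourier estimate already gives with no slack, and any constant-factor loss in the bias-to-almost-independence reduction would in any case be absorbed by the $\poly(\cdot)$ in the statement, so the choice of $\epsilon'$ need not be optimized.
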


We next show how to compute a family of $(n,k)$-perfect hash functions in polynomial time.
\begin{claim}\label{cl:small-perfect}
One can compute a family of $(n,k)$-perfect hash functions $\mathcal{H}=\{h: [n] \to [2k^2]\}$ of cardinality $\poly(k\log n)$ in time $\poly(n,k)$.
\end{claim}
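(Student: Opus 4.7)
The plan is to obtain the perfect hash family directly from an $\epsilon$-almost pairwise independent family via Fact~\ref{fact:pihash}, with a suitable choice of parameters. I would set $\alpha = \lceil \log n \rceil$ so that $[n]$ embeds into $\{0,1\}^\alpha$, take $\beta = \lfloor \log(2k^2) \rfloor$ so that the range size $m := 2^\beta$ satisfies $k^2 \leq m \leq 2k^2$ (and can therefore be identified with a subset of $[2k^2]$), and fix the constant $\epsilon = 1/2$. Fact~\ref{fact:pihash} then yields an explicit family $\mathcal{H}_{\alpha,\beta}$ of cardinality $\poly(\alpha \cdot 2^\beta / \epsilon) = \poly(k \log n)$, constructible in $\poly(n,k)$ time.

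The main step is to argue that $\mathcal{H}_{\alpha,\beta}$ is itself $(n,k)$-perfect. Fix any $S \subseteq [n]$ with $|S| \leq k$ and any pair $x_1 \neq x_2 \in S$. Summing the bound of Definition~\ref{def:almost} over the $m$ diagonal choices $y_1 = y_2 = y$ gives
\[
\Pr_{h \in \mathcal{H}_{\alpha,\beta}}\!\bigl[h(x_1) = h(x_2)\bigr] \;\leq\; (1+\epsilon)/m.
\]
A union bound over the $\binom{k}{2}$ pairs in $S$ then shows that a uniformly random $h \in \mathcal{H}_{\alpha,\beta}$ fails to be injective on $S$ with probability at most $(1+\epsilon)\binom{k}{2}/m \leq (3/2)(k^2/2)/k^2 = 3/4 < 1$. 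Consequently, for every such $S$ at least one $h \in \mathcal{H}_{\alpha,\beta}$ hashes $S$ injectively, proving the perfect-hash property.

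Combining the two observations yields the desired bounds on cardinality and construction time. I do not expect a substantive obstacle; the argument is essentially a direct instantiation of almost pairwise independence with the range tuned to $\Theta(k^2)$ so that the birthday-style collision bound drops below $1$. The only piece of bookkeeping is to align the power-of-two range $2^\beta$ with the prescribed $[2k^2]$, which is handled by the choice $\beta = \lfloor \log(2k^2) \rfloor$, ensuring $m \geq k^2$ and keeping the collision probability strictly below $1$.
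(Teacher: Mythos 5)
Your proof is correct and follows essentially the same route as the paper's: instantiate the $\epsilon$-almost pairwise independent family of Fact~\ref{fact:pihash} with range size $\Theta(k^2)$, sum the bound of Definition~\ref{def:almost} over the diagonal $y_1=y_2$ to control $\Pr[h(x_1)=h(x_2)]$, and union-bound over pairs in $S$ to drive the collision probability below $1$. The only differences are cosmetic — the paper takes $\epsilon=0.1$ and bounds the number of pairs by $k^2$ while you take $\epsilon=1/2$ and use $\binom{k}{2}$, and your choice $\beta=\lfloor\log(2k^2)\rfloor$ handles the power-of-two range a bit more carefully than the paper's $\beta=\log(2k^2)$.
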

\begin{proof}
We use Fact \ref{fact:pihash} with $\alpha=\log n$, $\beta=\log(2k^2)$, and $\epsilon=0.1$, to get an $\epsilon$-almost pairwise independent hash function family $\mathcal{H}=\{h: \{0,1\}^\alpha \to \{0,1\}^\beta\}$. 
We now show that this family is perfect for subsets $S \in [1,n]$ of cardinality at most $k$.
Fix a subset $S \in [1,n]$, $|S|\leq k$. By definition, for every $x_1, x_2 \in S$ and $y_1, y_2 \in [2k^2]$,
$$\Pr_{h \in \mathcal{H}}[h(x_1)=y_1 \mbox{~and~} h(x_2)=y_2]\leq 1.1/(4 k^4).$$
Thus, the probability that a uniformly chosen random function $h \in \mathcal{H}$ collides on $S$ is 
\begin{eqnarray}
\sum_{x_1 \neq x_2 \in S}\Pr_{h \in \mathcal{H}}[h(x_1)=h(x_2)]&\leq& k^2 \cdot \max_{x_1 \neq x_2 \in S}\Pr_{h \in \mathcal{H}}[h(x_1)=h(x_2)]
\\& =& \nonumber
k^2\cdot \max_{x_1 \neq x_2 \in S}\sum_{y \in [2k^2]}\Pr[h(x_1)=h(x_2)=y]\leq 0.6~, \nonumber
\end{eqnarray}
by using the fact that $\Pr_{h \in \mathcal{H}}[h(x_1)=h(x_2)=y]\leq 1.1/(4k^4)$ (
see Def. \ref{def:almost}). We get that there exists $h' \in \mathcal{H}$ that has no collisions on $S$. As this holds for every $S$, the claim follows.
\end{proof}
Equipped with the polynomial construction of families of $(n,k)$-perfect hash functions, we next show how to compute our universal sets in polynomial time.
\begin{lemma}[Small Universal Sets]\label{lem:universalsets}
For every set of integers $b<a<n$, one can compute in $\poly(n,a^{b})$, a family of universal sets $\mathcal{S}_{n,a,b}$ of cardinality $\widetilde{O}((4a)^{O(1)+2b})$.
\end{lemma}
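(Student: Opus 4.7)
The plan is to reduce the problem on $[n]$ to a combinatorial problem on a much smaller universe via the perfect hash family from Claim~\ref{cl:small-perfect}, and then enumerate brute-force over that small universe.

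First, I would invoke Claim~\ref{cl:small-perfect} with parameter $k=a+b$ to obtain in time $\poly(n,a+b)$ a family $\mathcal{H}=\{h:[n]\to [2(a+b)^2]\}$ of $(n,a+b)$-perfect hash functions of cardinality $\poly((a+b)\log n)$. Then I would define the universal family
\[
\mathcal{S}_{n,a,b} \;=\; \bigl\{\, S_{h,B'} \,:\, h\in\mathcal{H},\; B'\subseteq [2(a+b)^2],\; |B'|\leq b\,\bigr\},
\]
where $S_{h,B'} = \{\, i\in[n] : h(i)\notin B'\,\}$. Each set $S_{h,B'}$ can be written down in $O(n)$ time, so the total running time is $\poly(n,a^b)$ as required.

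For correctness, fix disjoint $A,B\subseteq[n]$ with $|A|\leq a$, $|B|\leq b$. Since $|A\cup B|\leq a+b$, perfectness provides some $h\in\mathcal{H}$ that is injective on $A\cup B$, so in particular $h(A)\cap h(B)=\emptyset$. Set $B':=h(B)$; then $|B'|=|B|\leq b$, and the set $S_{h,B'}$ is included in $\mathcal{S}_{n,a,b}$. Disjointness of $h(A)$ and $B'=h(B)$ gives $A\subseteq S_{h,B'}$, while $h(B)\subseteq B'$ immediately yields $B\cap S_{h,B'}=\emptyset$, establishing the FT-universal property.

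For the size bound, $|\mathcal{S}_{n,a,b}|\leq |\mathcal{H}|\cdot\binom{2(a+b)^2}{b}$. Using $b<a$ so that $(a+b)^2\leq 4a^2$, the binomial factor is at most $(8a^2)^b \leq (4a)^{2b+O(1)}$; the factor $|\mathcal{H}|=\poly((a+b)\log n)$ contributes an extra $\polylog n\cdot\poly(a)$ which is absorbed into the $\widetilde{O}((4a)^{O(1)+2b})$ bound.

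The main calibration point, and the step where I expect trouble, is the choice of the hash range. If the range is subquadratic in $a+b$, Claim~\ref{cl:small-perfect} no longer provides an injective $h$ on $A\cup B$ and the disjointness $h(A)\cap h(B)=\emptyset$ breaks; if it is much larger than $(a+b)^2$, the enumeration over $B'\subseteq [2(a+b)^2]$ of size $\leq b$ blows past the target $(4a)^{2b+O(1)}$. Taking the range to be exactly $2(a+b)^2$ as supplied by Claim~\ref{cl:small-perfect} balances the two factors and matches the claimed size precisely.
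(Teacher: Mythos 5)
Your proof is correct and takes essentially the same route as the paper: set $k=a+b$, build the perfect hash family $\mathcal{H}$ from Claim~\ref{cl:small-perfect}, and take complements of preimages of small subsets of the range $[2k^2]$. The only cosmetic difference is that you index by unordered subsets $B'\subseteq[2k^2]$, $|B'|\leq b$, while the paper enumerates ordered tuples $(i_1,\ldots,i_b)$; both give the same family up to a harmless constant, and your correctness argument (pick $h$ injective on $A\cup B$, set $B'=h(B)$) is the paper's argument verbatim.
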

\begin{proof}
Set $k=a+b$. We will use Claim \ref{cl:small-perfect} to compute an $(n,k)$-perfect family of hash functions $\mathcal{H}=\{h: [n] \to [2k^2]\}$. 
For every $h \in \mathcal{H}$ and for every subset $i_1,\ldots, i_b \in [1,2k^2]$, define:
$$S_{h,i_1,i_2,\ldots, i_b}=\{ \ell \in [n] ~\mid~ h(\ell) \notin \{i_1,i_2,\ldots, i_b\}\}~.$$
Overall, $\mathcal{S}_{n,a,b}=\{S_{h,i_1,i_2,\ldots, i_b} ~\mid~ h \in \mathcal{H}, i_1,i_2,\ldots, i_b \in [1,2k^2]\}$. 

The size of $\mathcal{S}_{n,a,b}$ is bounded by $O(|\mathcal{H}|\cdot (2k)^{2b})=\widetilde{O}((4a)^{O(1)+2b})$ as desired. We now show that $\mathcal{S}_{n,a,b}$ is indeed a family of universal sets for $n,a,b$. 
Since $\mathcal{H}$ is an $(n,k)$ perfect family of hash functions, for every two disjoint subsets $A,B \subset [n]$, $|A|\leq a$ and $|B|\leq b$, there exists a function $h$ that does not collide on $C=A \cup B$ (since $|C|\leq k$). That is, there exists a function $h \in \mathcal{H}$ such that $h(i) \neq h(j)$ for every $i,j \in C$, $i \neq j$. 
Thus, letting $B=\{s_1,\ldots, s_b\}$ and $i_1=h(s_1),\ldots, i_b=h(s_b)$, we have that $h(s'_j)\notin \{i_1,\ldots, i_b\}$ for every $s'_j \in A$. Therefore, the subset $S_{h,i_1,i_2,\ldots, i_b}$ satisfies that 
$A  \subseteq S_{h,i_1,i_2,\ldots, i_b}$ and $B \cap  S_{h,i_1,i_2,\ldots, i_b}=\emptyset$. 
\end{proof}

\paragraph{Deterministic Min-Cut Algorithm.}
Finally, we describe how to use FT-universal sets to get a $\poly(D)$-round distributed algorithm for exact computation of small cuts. The only randomized part of the algorithm above is in defining the $\ell=\poly(D)$ subgraphs $G_i$. Instead of sampling these subgraphs, each vertex computes them explicitly and locally. First, we rename all the edges to be in $[1,m]$. This can be easily done in $O(D)$ rounds. 
Now, each vertex locally computes a family of universal sets for parameters $m,k=O(\lambda \cdot D),q=\lambda$.
By Lemma \ref{lem:universalsets}, the family $\mathcal{S}$ contains $(\lambda \cdot D)^{\lambda}=\poly(D)$ subsets in $[1,m]$. Each of the sets $S_i \in \mathcal{S}$ will be used as a subgraph $G_i$ in the $i^{th}$ iteration. 
That is, we iterate over all subsets (subgraphs) in $\mathcal{S}$. In iteration $i$, all vertices know the set $S_i$ and thus can locally decide which of their incident edges is in $G_i$.
The correctness now follows the exact same line as that of the randomized algorithm.
%
\section{Computation of All Edge Connectivities}\vspace{-5pt}
Finally, we consider the more general task of computing the edge connectivity of all graph edges up to some constant value $\lambda$. For an edge $e=(u,v)$, let $\lambda(e)$ be the $u$-$v$ edge connectivity in $G$, that is, the number of edge-disjoint $u$-$v$ paths in $G$. By using the recent notion of low-congestion cycle cover \cite{parter-yogevSODAa}, we show:
\begin{lemma}\label{thm:dist-edge-connect}[Distributed All Edge Connectivities]
For every $D$-diameter graph $G$, there is a randomized distributed algorithm that w.h.p. computes all edge connectivities up to some constant value $\lambda$ within $2^{O(\sqrt{\log n})}\cdot \poly(D)$ rounds. That is, in the output solution, the endpoints of every edge $e=(u,v)$ know the connectivity $\lambda(e)$ of this edge, as well as a certificate for that connectivity.
\end{lemma}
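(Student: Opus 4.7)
The plan is to simulate the single-source minimum cut algorithm of Theorem 3 in parallel, once per edge $e=(u,v)\in E$, with the two endpoints of $e$ playing the roles of source and sink for their own instance. Running $|E|$ independent instances sequentially would cost $\widetilde{\Omega}(m)$ rounds, whereas a naive parallelization incurs prohibitive congestion. The low-congestion cycle cover of \cite{parter-yogevSODAa}, together with its distributed construction \cite{parter-yogevCycles}, is the scheduling tool that keeps the overhead to $2^{O(\sqrt{\log n})}$.

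First I would compute, in $2^{O(\sqrt{\log n})}\cdot\poly(D)$ rounds, a low-congestion cycle cover $\mathcal{C}=\{C_e\}_{e\in E}$: every edge $e$ is assigned a cycle $C_e\ni e$ whose length and per-edge congestion are both bounded by $L=2^{O(\sqrt{\log n})}$. In parallel, generate the family $G_1,\ldots,G_\ell$ of $\ell=\poly(D)$ subgraphs exactly as in Theorem 3, by sampling each edge of $G$ into $G_i$ independently with the same parameter $p=1-1/(c\lambda D)$; to support the union bound over all $m$ edge-pairs we only need to blow $\ell$ up by an extra $O(\log n)$ factor, which is absorbed into $\poly(D)$.

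Next, in each iteration $i\in[\ell]$ and for every edge $e=(u,v)$ with $e\notin G_i$, both $u$ and $v$ must discover a $u$-$v$ path in $G_i$ of length at most $c\lambda D$, if such a path exists. By Lemma 4, such a path is guaranteed whenever iteration $i$ is ``good'' for the pair $(u,v)$ and a particular fault set $F$, exactly as in Claim 6. I would use $C_e$ as the dedicated aggregation channel of edge $e$'s task: every message related to $e$'s path-search travels only along $C_e$. Because each graph edge participates in at most $L$ cycles of $\mathcal{C}$, this routing inflates the per-iteration round count by at most a factor of $L$, giving $O(\lambda D\cdot L)=2^{O(\sqrt{\log n})}\cdot\poly(D)$ rounds per iteration and thus $2^{O(\sqrt{\log n})}\cdot\poly(D)$ rounds overall. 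At the end, each endpoint of $e$ holds the accumulated subgraph $G_{u,v}=\{e\}\cup\bigcup_{i}P_{i,e}$ and locally computes $\lambda(u,v,G_{u,v})$, together with a certifying collection of edge-disjoint paths. Correctness and the w.h.p.\ guarantee follow from the exact same argument as Claim 6, applied with the pair $(u,v)$ replacing $(s,t)$ and with the union bound taken over all $m$ edges.

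The main obstacle is faithfully realizing the parallel path-discovery step: a short $u$-$v$ path in $G_i$ typically does not lie on $C_e$, so $C_e$ plays the role of an aggregation and broadcast backbone, not of a search substrate. A concrete implementation would decouple global exploration from per-edge aggregation: each vertex first disseminates its local view of $G_i$ along each of its incident cycles, and then each edge $e$ reconstructs its short $u$-$v$ path from the views that reach $C_e$. Equivalently, one may invoke the cycle cover directly as a low-congestion shortcut in the sense of \cite{parter-yogevSODAa}, simulating each per-edge BFS in $G_i$ with only $L$-fold congestion overhead. Verifying that this simulation fits within $2^{O(\sqrt{\log n})}\cdot\poly(D)$ rounds, and that the high-probability guarantee from Claim 6 survives the union bound over all $m$ edges, is the technical crux of the proof.
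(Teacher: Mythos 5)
Your proposal misses the key structural insight the paper uses, and as a result it runs into a genuine gap that you yourself flag as ``the technical crux'' without resolving it. You treat the low-congestion cycle cover as a \emph{routing/aggregation substrate}: compute it once on $G$, then try to run $m$ parallel per-edge BFS instances and schedule the congestion through the cycles $C_e$. But a cycle $C_e$ through $e$ gives no handle on a short $u$-$v$ path inside a sampled subgraph $G_i$, and neither of the workarounds you sketch --- disseminating local views of $G_i$ along incident cycles, or invoking the cycle cover ``as a low-congestion shortcut'' --- has a clear implementation. Low-congestion cycle covers are not shortcuts in the Ghaffari--Haeupler sense, and in general they do not let you simulate an arbitrary BFS in $G_i$ with $L$-fold overhead. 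So the step on which your whole round bound rests is open.

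The paper sidesteps per-edge scheduling entirely by using the cycle cover as the \emph{combinatorial certificate itself}, not as a routing tool, and by recomputing it in every iteration. In iteration $i$, after sampling $G_i$, the paper runs Alg.\ $\ApproxGraphCover$ \emph{on $G_i$} with length parameter $D'=O(\lambda D)$. Corollary~\ref{thm:distoptcyclecover} guarantees that for every edge $e$ in $G_i$ that lies on some cycle of length $\le D'$ in $G_i$, the algorithm outputs a cycle through $e$ of length $2^{O(\sqrt{\log n})}D'$, with every graph edge on at most $2^{O(\sqrt{\log n})}$ output cycles; crucially, this is done \emph{for all edges simultaneously} in $2^{O(\sqrt{\log n})}D'$ rounds, with congestion bounded internally by the cycle-cover construction. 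A cycle through $e=(u,v)$ in $G_i$ is exactly a $u$-$v$ path in $G_i\setminus\{e\}$, so $G_{u,v}=\bigcup_i\{C\in\mathcal{C}_i : e\in C\}$ accumulates precisely the path family your per-edge BFS was supposed to find, and Claim~\ref{cl:cert-edge-conn} then mirrors Claim~\ref{cl:succprobcut} with the triplet $\langle u,v,F\rangle$ argument. The idea you are missing is: do not run one cycle cover and then search; run a \emph{fresh cycle cover on each sampled $G_i$}, and let the cycle-cover algorithm's built-in length and congestion bounds replace the per-edge BFS and the scheduling argument altogether.
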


\paragraph{Low Congestion Cycle Covers.}
A $(\dilation,\congestion)$ cycle cover $\mathcal{C}$ is a collection of cycles of length at most $\dilation$, such that each edge appears on at least one cycle and at most $\congestion$ cycles. We will use the recent deterministic distributed construction of cycle covers of \cite{parter-yogevCycles}.

\begin{lemma}\label{thm:distcyclecover}\cite{parter-yogevCycles}[Distributed Cycle Cover]
For every bridgeless $n$-vertex graph $G=(V,E)$ with diameter $D$, one can compute a
$(\dilation,\congestion)$ cycle cover 
$\cC$ with $\dilation=2^{O(\sqrt{\log n})}\cdot D$ and $\congestion=2^{O(\sqrt{\log n})}$, within 
 $\widetilde{O}(\dilation \cdot \congestion)$ rounds. 
\end{lemma}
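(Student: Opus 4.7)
The plan is to extend the single-source FT-sampling algorithm from Section~2 to the all-edge setting, using the low-congestion cycle cover of Lemma~\ref{thm:distcyclecover} as a routing backbone that lets each edge process its own $(u,v)$-connectivity computation in parallel with the others. The outline has four steps.

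\textbf{Step 1 (cycle cover):} First compute a $(\dilation,\congestion)$ cycle cover $\mathcal{C}$ with $\dilation = 2^{O(\sqrt{\log n})}\cdot D$ and $\congestion = 2^{O(\sqrt{\log n})}$ via Lemma~\ref{thm:distcyclecover}, in $\widetilde{O}(\dilation\cdot\congestion)$ rounds. For each edge $e=(u,v)$ (with $\lambda(e)\ge 2$, i.e., $e$ is not a bridge), this assigns a cycle $C_e\in\mathcal{C}$ of length at most $\dilation$ through $e$; the path $C_e\setminus\{e\}$ is a $u$–$v$ path in $G-e$ that will serve as a private coordination channel between $u$ and $v$.

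\textbf{Step 2 (FT-sampling):} Produce $\ell=\poly(D)$ subgraphs $G_1,\ldots,G_\ell$, either by sampling each edge independently with probability $p=1-1/(c\lambda D)$, or deterministically via the FT-universal sets of Lemma~\ref{lem:universalsets} with parameters $a=c\lambda D$, $b=\lambda$. Exactly as in Claim~\ref{cl:succprobcut}, this guarantees the following key property: for every edge $e=(u,v)$ and every $u$–$v$ edge cut $F$ of size at most $\lambda$ (which, because $e$ is itself a $u$–$v$ path, must contain $e$), some iteration $i$ satisfies $\pi(u,v,G\setminus F)\subseteq G_i$ and $F\cap G_i=\emptyset$. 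By Lemma~\ref{lem:bounded-length-connected}(1) applied to the pair $(u,v)$ and the fault set $F\setminus\{e\}$ in $G-e$, this shortest path has length at most $c\lambda D$.

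\textbf{Step 3 (per-iteration edge-parallel sub-routine, the main technical step):} For each iteration $i$, have every edge $e=(u,v)$ learn a $u$–$v$ shortest path in $G_i$ of length at most $c\lambda D$, if one exists. The role of the cycle cover is to avoid a naive congestion blow-up: treating $u$ and $v$ as an edge-specific source/sink pair, both endpoints run truncated BFS explorations in $G_i$ (restricted to depth $c\lambda D$) whose frontier information is tagged by $e$'s identifier, and the discovered segments are concatenated at the endpoints by routing reports along $C_e$. The cycle cover gives every edge its own length-$\dilation$ backbone $C_e$ with each host edge of $G$ appearing in at most $\congestion$ such backbones, so the aggregate congestion of the $m$ edge-wise computations is bounded by $\widetilde{O}(\congestion)$ and the aggregate dilation by $O(\dilation + \lambda D)$. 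By a standard random-delay / pipelining schedule this costs $\widetilde{O}(\dilation\cdot \congestion) = 2^{O(\sqrt{\log n})}\cdot D$ rounds per iteration.

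\textbf{Step 4 (local decoding):} After all $\ell$ iterations, each edge $e=(u,v)$'s endpoints possess a subgraph $G_e\subseteq G$, namely the union of the $u$–$v$ paths collected across iterations. Adapting the proof of Claim~\ref{cl:succprobcut} to the pair $(u,v)$ yields $\lambda(u,v,G_e)=\min(\lambda(e),\lambda+1)$ w.h.p.; $u$ (and symmetrically $v$) then computes $\lambda(e)$ locally in $G_e$ via any centralized $O(1)$-connectivity routine, and reports the $\min(\lambda(e),\lambda+1)$-size certificate across $e$. The total round complexity is $\ell\cdot \widetilde{O}(\dilation\cdot\congestion) = \poly(D)\cdot 2^{O(\sqrt{\log n})}$.

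The main obstacle I anticipate is Step~3: the single-source min-cut algorithm only needs one truncated BFS per iteration, whereas here we need the analogous information for every edge simultaneously, and running $m$ truncated BFS trees of depth $c\lambda D$ naively would incur $\Omega(m)$ congestion. The cycle cover is the essential tool that reduces this to $\widetilde{O}(\dilation\cdot\congestion)$, but it requires care to ensure that the BFS messages for edge $e$ are actually funneled through $C_e$ (and not along arbitrary long detours), and that the pipelining bound holds despite the $c\lambda D$ exploration radius being of the same order as $\dilation$. Once the per-iteration scheduling is in place, correctness is inherited from the FT-sampling analysis (Claim~\ref{cl:succprobcut}) applied edge-by-edge with a union bound over all $\binom{n}{2}^{\lambda+1}\le n^{O(\lambda)}$ relevant $(u,v,F)$ triples.
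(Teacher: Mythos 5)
You have proved the wrong statement. Lemma~\ref{thm:distcyclecover} is the distributed \emph{cycle cover} construction itself, cited from \cite{parter-yogevCycles}; the paper does not prove it, it imports it as a black box. Your proposal never constructs a cycle cover — in fact your Step~1 \emph{invokes} Lemma~\ref{thm:distcyclecover} to obtain $\mathcal{C}$, which is circular if the goal is to prove that very lemma. Proving the cycle cover result would require the machinery of \cite{parter-yogevCycles} (a recursive low-congestion decomposition producing short cycles through every non-bridge edge with $2^{O(\sqrt{\log n})}$ overlap), none of which appears in your argument.

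What you have actually sketched is closer to Lemma~\ref{thm:dist-edge-connect} (all-edge connectivities), and even there the route differs from the paper's. The paper does not run per-edge truncated BFS explorations threaded through the cycle backbones; instead, in each sampling iteration it applies $\ApproxGraphCover$ (Corollary~\ref{thm:distoptcyclecover}, i.e.\ the cycle cover composed with neighborhood covers) directly to the sampled subgraph $G_i$, so that each edge $e$ automatically receives the short cycles covering it in $G_i$, and then the union $G_{u,v}$ of these cycles is decoded locally. Your Step~3 replaces this with $m$ parallel BFS trees of depth $c\lambda D$ and claims a $\widetilde{O}(\dilation\cdot\congestion)$ bound via random delays; this is not justified as written, since the BFS frontiers are not confined to the backbone cycles $C_e$ and nothing caps the number of distinct edge-tags that can pile up on a single host edge of $G_i$. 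So even as a proof of the all-edge-connectivities lemma, Step~3 has a real gap, and in any case it is not a proof of the cycle cover lemma you were asked about.
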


Combining the lemma above with the centralized construction of nearly-optimal cycle covers of \cite{parter-yogevSODAa}, we get:
\begin{corollary}\label{thm:distoptcyclecover}[Distributed Opt. Cycle Cover]
For an $n$-vertex graph $G=(V,E)$ (not necessarily connected), there is a randomized algorithm $\ApproxGraphCover$ that given the graph $G$ and parameter $D'$ computes w.h.p. a cycle collection $\cC$ such that: (a) every edge $e$ that lies on a cycle $C_e$ in $G$ of length at most $D'$ is covered by a cycle $C' \in \cC$ of length $2^{O(\sqrt{\log n})}\cdot |C_e|$, and (b) each edge appears on $2^{O(\sqrt{\log n})}$ cycles. In the output format of the algorithm, every edge $e$ learns all the cycles in $\cC$ that go through this edge. 
The round complexity of Alg. $\ApproxGraphCover$ is $2^{O(\sqrt{\log n})}\cdot D'$.
\end{corollary}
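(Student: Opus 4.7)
The plan is to obtain cycles whose length scales with the length $|C_e|$ of the shortest cycle through each edge $e$, rather than with the diameter of $G$, by combining Lemma~\ref{thm:distcyclecover} with a geometric scale decomposition. For $i=0,1,\ldots,\lceil\log D'\rceil$, set $d_i=2^i$, and dedicate scale $i$ to edges whose shortest enclosing cycle has length in $(d_{i-1}, d_i]$.

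At each scale $i$ I would build a sparse neighborhood cover $\cC_i$ of $G$ with cluster radius $r_i = 2^{O(\sqrt{\log n})}\cdot d_i$ and overlap $2^{O(\sqrt{\log n})}$, constructed in $2^{O(\sqrt{\log n})}\cdot d_i$ rounds via a standard exponential-start-time neighborhood-cover procedure. The defining property I require is that for every edge $e\in E$ there is some cluster $C \in \cC_i$ whose vertex set contains the ball $B_G(e, d_i)$. Inside each cluster $C$ (in parallel across clusters, using the overlap bound) I first discard the bridges of the induced subgraph $G[V(C)]$, which is a purely local task within $C$, and then invoke Lemma~\ref{thm:distcyclecover} on the remaining bridgeless subgraph. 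This produces a $(2^{O(\sqrt{\log n})}\cdot d_i,\, 2^{O(\sqrt{\log n})})$ cycle cover inside $C$ in $2^{O(\sqrt{\log n})}\cdot d_i$ rounds. The final output $\cC$ is the union of the cycles produced over all clusters and all scales, with each edge locally recording every cycle of $\cC$ that passes through it.

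For property~(a), fix an edge $e$ with $|C_e|\leq D'$ and take the smallest $i$ with $d_i\geq |C_e|$. By the padding property some cluster $C \in \cC_i$ has $B_G(e, d_i)\subseteq V(C)$ and hence contains $C_e$; in particular $e$ is not a bridge of $G[V(C)]$ and survives the bridge-removal step, so Lemma~\ref{thm:distcyclecover} covers $e$ by a cycle of length $2^{O(\sqrt{\log n})}\cdot d_i \leq 2^{O(\sqrt{\log n})}\cdot |C_e|$. For (b), each edge lies in at most $2^{O(\sqrt{\log n})}$ clusters per scale and each cluster contributes congestion $2^{O(\sqrt{\log n})}$; summing over $O(\log D')$ scales leaves the total congestion at $2^{O(\sqrt{\log n})}$. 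The round complexity telescopes to $\sum_i 2^{O(\sqrt{\log n})}\cdot d_i = 2^{O(\sqrt{\log n})}\cdot D'$. Disconnectedness of $G$ is immaterial since every step is performed componentwise.

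I expect the main obstacle to be producing the cover $\cC_i$ so that simultaneously (i) every cluster has radius $2^{O(\sqrt{\log n})}\cdot d_i$, (ii) every $d_i$-ball around an edge is padded inside some cluster, and (iii) the overlap is $2^{O(\sqrt{\log n})}$, all within $2^{O(\sqrt{\log n})}\cdot d_i$ rounds; matching the $2^{O(\sqrt{\log n})}$ blow-up uniformly across these three parameters is where the tightness of the construction will be put to the test. A secondary subtlety is that an edge $e$ may appear as a bridge inside a cluster that only partially contains $C_e$, so one must argue that the analysis only depends on the unique cluster that fully pads $e$, and that bridge removal in the remaining clusters does not disturb the guarantee.
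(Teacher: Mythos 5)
Your proposal matches the paper's proof essentially exactly: both iterate over geometric scales $k=2^j$ for $j$ up to $\lceil \log D' \rceil$, build a sparse neighborhood cover at each scale, invoke the cycle-cover algorithm of Lemma~\ref{thm:distcyclecover} inside each cluster, and argue property~(a) by noting that the shortest enclosing cycle $C_e$ is fully padded inside some cluster at the scale $2^j\approx |C_e|$, with the congestion and round bounds telescoping over scales. Your explicit bridge-removal step (needed to meet the bridgeless hypothesis of Lemma~\ref{thm:distcyclecover}) is a detail the paper elides but is indeed required, and your closing worry about achieving all three neighborhood-cover parameters simultaneously is not an issue: standard sparse neighborhood covers already give radius blow-up and overlap $O(\log n)$, which is dominated by the $2^{O(\sqrt{\log n})}$ factor coming from the cycle-cover subroutine itself.
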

The high level idea of Alg. $\ApproxGraphCover$ is based on the notion of neighborhood covers. Roughly speaking, the $k$ neighborhood cover for a graph $G$ is a collection of subgraphs $G_1,\ldots, G_\ell$ such that the following three properties hold: (1) for each vertex $v$, there is a subgraph $G_i$ that contains its entire $k$-hop neighborhood, (2) the diameter of each graph $G_i$ is $O(k\log n)$, and (3) each vertex appears on $O(\log n)$ subgraphs. 
Alg. $\ApproxGraphCover$ is obtained by applying the cycle cover algorithm of Theorem \ref{thm:distcyclecover} on each subgraph $G_i$ in the $k$ neighborhood cover of $G$, for every value $k=2^{j}$, $j \in \{1,\lceil \log(D') \rceil\}$. This increases the total congestion of the cycles by at most a logarithmic factor. 
\indent To see why this works, consider an edge $e$ 
that lies on a cycle $C_e$ of length $|C_e|\leq D'$ in $G$. Letting $|C_e|\in [2^{j-1},2^{j}]$, we have that $C_e$ is fully contained in one of the subgraphs of a $k$ neighborhood cover for $k=2^j$. Hence, due to Theorem \ref{thm:distcyclecover} the edge $e$ is covered by a cycle of length $2^{O(\sqrt{\log n})}|C_e|$ as desired. 
%

\paragraph{From Cycle Covers to Edge Connectivities.}
The algorithm for computing all the edge connectivities is based on combining the FT-sampling approach with Alg. $\ApproxGraphCover$. In the high level, using the sampling technique, the algorithm attempts to compute not a single cycle for covering an edge $e=(u,v)$, but rather a collection of $\lambda$ edge-disjoint cycles that covers this edge (i.e., the edge $e$ is the only common edge in these cycles). If it fails in finding these edge disjoint cycles, it deduces that the $u$-$v$ connectivity is less than $\lambda$. In the latter case, it also finds all $u$-$v$ cuts in $G$.

Let $D'=2c\cdot \lambda \cdot D+1$. 
The algorithm consists of $\ell=O(\lambda \cdot D^{\lambda} \log n)$ iterations that we treat as \emph{experiments}. In each experiment $i$, we sample each edge $e \in E(G)$ into $G_i$ with probability $p=(1-1/(D')^{\lambda})$, and compute a cycle cover $\mathcal{C}_i$ by applying Alg. $\ApproxGraphCover$ on $G_i$ with parameter $D'$. 
For every edge $e=(u,v)$, let $G_{u,v}=\bigcup_{i}\{C ~\mid~ e \in C, C\in \mathcal{C}_i\}$ be the union of all cycles that go through $e$. The nodes $u,v$ compute the edge connectivity of $e$ by \emph{locally} computing the $u$-$v$ cut in the subgraph $G_{u,v}$. For a pseudo-code of the algorithm see Fig. \ref{fig:main-algorithm-non-tree-intro}. 
By a similar argument to that of Cl. \ref{cl:succprobcut}, we show:
\begin{claim}\label{cl:cert-edge-conn}
The subgraph $G_{u,v}$ is a $u$-$v$ connectivity certificate up to connectivity of $\lambda$.
\end{claim}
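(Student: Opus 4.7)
The plan is to mirror the argument of Claim \ref{cl:succprobcut}, replacing the ``shortest path in a BFS tree'' step by the short replacement cycle delivered by the cycle cover of Corollary \ref{thm:distoptcyclecover}. Since $G_{u,v}\subseteq G$ already bounds $\lambda(u,v,G_{u,v})\leq \lambda(u,v,G)$, it suffices to prove that, w.h.p. over the $\ell$ experiments, for every edge set $F\subseteq E$ with $|F|\leq \lambda-1$ in which $u,v$ remain connected in $G\setminus F$, they are also connected in $G_{u,v}\setminus F$. The case $e\notin F$ is immediate, since w.h.p. $e\in G_{u,v}$ (a single successful experiment already places $e$ on a short cycle), so I would focus on $F\ni e$.

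Fix such a triple $\langle u,v,F\rangle$. First I would apply Lemma \ref{lem:bounded-length-connected}(1) to the pair $u,v$ with faults $F\setminus\{e\}$ (of size $\leq \lambda-1$) to obtain a $u$-$v$ path $P\subseteq G\setminus F$ of length at most $c\lambda D$; together with $e$ this produces a cycle $C_e\ni e$ of length at most $D'=2c\lambda D+1$, disjoint from $F\setminus\{e\}$. I would then call experiment $i$ \emph{successful} for $\langle u,v,F\rangle$ when
\[C_e\subseteq G_i \quad\text{and}\quad (F\setminus\{e\})\cap G_i=\emptyset.\]
When this holds, $e$ lies on a cycle of length at most $D'$ in $G_i$, so Corollary \ref{thm:distoptcyclecover} guarantees a cycle $C'\in\mathcal{C}_i$ through $e$ with $C'\subseteq G_i$. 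In particular $C'\cap F\subseteq\{e\}$, so $C'\setminus\{e\}$ is a $u$-$v$ path contained in $G_{u,v}\setminus F$, as required.

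Since the edges of $C_e$ and $F\setminus\{e\}$ are disjoint and each edge is included in $G_i$ independently with probability $p=1-1/(D')^{\lambda}$, a single experiment succeeds with probability at least $p^{|C_e|}(1-p)^{|F|-1}\geq 1/\poly(D^{\lambda})$. The $\ell=O(\lambda D^{\lambda}\log n)$ experiments are independent, so $\ell$ is large enough to drive the failure probability of a fixed $\langle u,v,F\rangle$ below $1/n^{\Theta(\lambda)}$; a union bound over the at most $m^{\lambda-1}$ relevant fault sets containing $e$ then finishes the argument.

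The main obstacle is the quantitative balance between the sampling probability $p$, the length threshold $D'$, and the number of experiments $\ell$. The cycle-cover guarantee only kicks in when the replacement cycle $C_e$ in the sampled subgraph has length at most $D'$; this is precisely why $D'$ is chosen to absorb the worst-case detour of Lemma \ref{lem:bounded-length-connected}, and why $p$ must be close enough to $1$ that $p^{D'}$ is bounded below while $(1-p)^{\lambda-1}$ is only polynomially small. Once these parameters are pinned down, the union-bound step is routine.
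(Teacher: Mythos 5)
Your overall approach matches the paper's: reduce to finding a short cycle through $e$ disjoint from the remaining faults, argue each experiment captures it with $1/\poly(D)$ probability, then union-bound. But there is a real slip in how you invoke Lemma~\ref{lem:bounded-length-connected}(1). You apply it ``with faults $F\setminus\{e\}$'' and claim to obtain a path $P\subseteq G\setminus F$. The lemma only produces a path in $G\setminus(F\setminus\{e\})$, a graph that still contains $e$; since $u$ and $v$ are adjacent via $e$, the shortest $u$-$v$ path there is simply $\{e\}$ (length $1$), so $P\not\subseteq G\setminus F$ and $P\cup\{e\}$ degenerates and gives no cycle. This is also why the bound $D'=2c\lambda D+1$ you quote does not follow from the length bound $c\lambda D$ you state for $P$.

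The paper sidesteps this by applying Lemma~\ref{lem:bounded-length-connected} inside $G\setminus\{e\}$ (whose diameter is at most $2D$), with a fault set $F$ of size at most $\lambda-2$, getting $\dist(u,v,G\setminus(F\cup\{e\}))\leq 2c\lambda D$; hence $D'=2c\lambda D+1$. An equivalent fix in your notation is to apply the lemma in $G$ with the \emph{full} fault set $F$ (which contains $e$ and has $|F|\leq\lambda-1$), giving a genuine path in $G\setminus F$ of length at most $c\lambda D$. With that correction, the rest of your argument — the success event $C_e\subseteq G_i$, $(F\setminus\{e\})\cap G_i=\emptyset$; invoking Corollary~\ref{thm:distoptcyclecover} to get a covering cycle $C'\subseteq G_i$ with $C'\cap F\subseteq\{e\}$; the success-probability and union-bound calculations — mirrors the paper's proof.
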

\begin{proof}
Let $e=(u,v)$ such that $u$ and $v$ are $\lambda$-edge connected. In other words, $u$ and $v$ are $(\lambda-1)$-connected in $G\setminus \{e\}$. Note that since the diameter of $G \setminus \{e\}$ is at most $2D$, by 
Lemma \ref{lem:bounded-length-connected}, for every $F \subseteq E(G) \setminus \{e\}$, $|F|\leq \lambda-2$, we have that $\dist(u,v, G \setminus (F \cup \{e\}))\leq 2c\cdot \lambda \cdot D$. Therefore, for any $F \subseteq E(G) \setminus \{e\}$, $|F|\leq \lambda-2$ the subgraph $G \setminus F$ contains a cycle that covers $e$ of length at most $D'=2c\cdot \lambda \cdot D+1$. 

To show that $G_{u,v}$ is a $\lambda$-certificate for such a neighboring pair $u,v$ (that is $\lambda$ edge connected in $G$), it is sufficient to show that for every failing of at most $\lambda-2$ edges $F$ where $e \notin F$, $G_{u,v}$ contains a $u$-$v$ path in $G_{u,v}\setminus (F \cup \{e\})$. Or in the other words, that $G_{u,v}\setminus F$ contains a cycle that covers $e$. 

Fix a failing set $F$, where $e \notin F$ and $|F|\leq \lambda-2$, and $u$ and $v$ are connected in $G \setminus (F \cup \{(u,v)\})$. We say that iteration $i$ is successful for such a triplet $\langle u,v, F \rangle$ if $F \cap G_i=\emptyset$, $(u,v)\in G_i$ and $\pi(u,v,G\setminus (F \cup \{(u,v)\})) \subseteq G_i$. Note that in such a case, since $G_i$ contains a cycle of length at most $D'$ that covers $e$, 
Algorithm $\ApproxGraphCover$ computes a cycle $C \subseteq G_i$ that covers the edge $e=(u,v)$, and thus a $u$-$v$ path in $G_i \setminus (F \cup \{e\})$ as desired.  
It remains to show that w.h.p. every triplet $\langle u,v, F \rangle$ has at least one successful iteration.
Since each edge is sampled w.p. $p$ into $G_i$, the iteration is successful with probability $\Omega(1/D^{\lambda})$. By simple application of the Chernoff bound, we get that the probability that a given triplet $\langle u,v, F \rangle$  does not have a successful iteration is at most $1/n^{c' \cdot \lambda}$. Thus by applying the union bound over all $n^{\lambda+2}$ triplets, the claim follows. 
\end{proof}
The proof of Lemma \ref{thm:dist-edge-connect} follows by Cor. \ref{thm:distoptcyclecover} and Cl. \ref{cl:cert-edge-conn}. Note that this algorithm can be made deterministic while keeping the same round complexity, by using the derandomization of the FT-sampling approach from Sec. \ref{sec:det-cut} along with the deterministic neighborhood cover construction of \cite{GhaffariK18}. 
\begin{lemma}\label{lem:edge-con-det}
All edge connectivities, up to a constant $\lambda$, can be computed deterministically in $2^{\sqrt{\log n \cdot \log\log n}}\cdot \poly(D)$.
\end{lemma}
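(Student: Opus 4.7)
The plan is to replay the proof of Lemma~\ref{thm:dist-edge-connect} while derandomizing its two probabilistic ingredients: (i)~the FT-sampling that produces the subgraphs $G_i$, and (ii)~the randomized approximate cycle cover of Corollary~\ref{thm:distoptcyclecover}. For (i) I will use the FT-universal families of Lemma~\ref{lem:universalsets} in the same way that Section~\ref{sec:det-cut} derandomizes the small-cut algorithm. For (ii) I will substitute the deterministic neighborhood-cover construction of \cite{GhaffariK18} for the randomized one, and feed it into the already-deterministic cycle-cover algorithm of Theorem~\ref{thm:distcyclecover}.

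For step (i), first relabel all edges as elements of $[1,m]$ in $O(D)$ rounds, and let every vertex locally compute $\mathcal{S}_{m,a,b}$ with $a=2c\cdot \lambda\cdot D+1$ and $b=\lambda-2$. Since $\lambda=O(1)$, Lemma~\ref{lem:universalsets} gives $|\mathcal{S}_{m,a,b}|=\poly(D)$, and each $S_i \in \mathcal{S}_{m,a,b}$ is reinterpreted as the subgraph $G_i$ used in iteration~$i$. By the FT-universal property, for every edge $e=(u,v)$ that is $\lambda$-edge connected in $G$ and every fault set $F\subseteq E\setminus\{e\}$ with $|F|\le \lambda-2$, the shortest $u$-$v$ path $P=\pi(u,v,G\setminus(F\cup\{e\}))$, which by Lemma~\ref{lem:bounded-length-connected} has length at most $c\lambda D$, together with $e$ forms a set of at most $a$ edges disjoint from $F$; hence some $G_i$ contains $P\cup\{e\}$ and is disjoint from $F$. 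This is precisely the ``successful iteration'' event used in the proof of Claim~\ref{cl:cert-edge-conn}, so the argument that $G_{u,v}$ is a $\lambda$-connectivity certificate for $e$ carries over verbatim, but now \emph{without} any randomness or union bound.

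For step (ii), the only randomized component inside Corollary~\ref{thm:distoptcyclecover} is the $k$-neighborhood cover used to reduce cycle-covering at distance scale $k$ to cycle-covering inside low-diameter pieces. Replacing it by the deterministic neighborhood-cover construction of \cite{GhaffariK18}, whose round complexity is $2^{O(\sqrt{\log n\log\log n})}$ and whose structural guarantees (each $k$-ball sits inside some piece of diameter $O(k\log n)$, each vertex belongs to $\polylog n$ pieces) match those used in Corollary~\ref{thm:distoptcyclecover}, and invoking the deterministic Theorem~\ref{thm:distcyclecover} on every piece, produces a deterministic variant of $\ApproxGraphCover$ with round complexity $2^{O(\sqrt{\log n\log\log n})}\cdot D'$ for $D'=O(\lambda D)$. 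Running it on each of the $\poly(D)$ subgraphs $G_i$ and having each edge $e=(u,v)$ aggregate the collection of cycles through it into $G_{u,v}$, so that its endpoints locally compute the $u$-$v$ min-cut in $G_{u,v}$, yields total round complexity $\poly(D)\cdot 2^{O(\sqrt{\log n\log\log n})}$, as required.

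The main point to verify, and the only non-mechanical step, is that combining the deterministic neighborhood cover with Theorem~\ref{thm:distcyclecover} preserves the covering guarantee stated in Corollary~\ref{thm:distoptcyclecover}: every edge lying on a cycle of length at most $D'$ in $G_i$ must still be covered by a short cycle in the resulting $\mathcal{C}_i$. This follows from the purely structural property that a cycle of length $\le 2^j$ fits inside some piece of the $2^j$-neighborhood cover, independently of whether the cover was produced randomly or deterministically, so no new probabilistic analysis is needed.
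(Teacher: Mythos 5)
Your proposal is correct and follows the same two-step derandomization as the paper: replace the FT-sampling of the subgraphs $G_i$ by FT-universal sets (Lemma~\ref{lem:universalsets}), and replace the randomized neighborhood cover inside $\ApproxGraphCover$ by the deterministic one of \cite{GhaffariK18}, observing that the covering guarantee of Corollary~\ref{thm:distoptcyclecover} is purely structural. One tiny slip in exposition: the path $\pi(u,v,G\setminus(F\cup\{e\}))$ has length at most $2c\lambda D$ (since Lemma~\ref{lem:bounded-length-connected} is applied in $G\setminus\{e\}$, which has diameter $\le 2D$), not $c\lambda D$ as you wrote, but your choice $a=2c\lambda D+1$ already accommodates this, so the argument is unaffected.
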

\begin{proof}
The algorithm requires two main adaptations. First the randomized algorithm $\ApproxGraphCover$ is made deterministic by using the deterministic construction of neighborhood covers of \cite{GhaffariK18} that uses $2^{\sqrt{\log n \cdot \log\log n}}$ rounds. 
%
%
%
In the second part, we use the derandomization of the FT-sampling, as in the minimum cut algorithm. 
Overall, the total round complexity is $2^{O(\sqrt{\log n\cdot \log\log n})} \cdot \poly(D)$.
\end{proof}
\begin{figure}[!h]
	\begin{boxedminipage}{\textwidth}
		\vspace{3mm} \textbf{Algorithm $\DistEdgeConn(G=(V,E),\lambda)$}
		\begin{enumerate}
    \item For $i=1$ to $O(\lambda \cdot D)^{2\lambda}$:
		\begin{itemize}
		\item Sample each edge $e$ into $G_i$ w.p. $p=(1-1/D')$.
		\item $\mathcal{C}_i \gets \ApproxGraphCover(G_i,D')$.
		\end{itemize}
		\item $\mathcal{C}=\bigcup_{i} \mathcal{C}_i$. 
		\item For every edge $e=(u,v)$, let $G_{u,v}=\{ C \in \mathcal{C} ~\mid~ e \in C\}$. 
		\item $\lambda(e)=\MinCut(G_{u,v})$. 
		\end{enumerate}
	\end{boxedminipage}
	\caption{Distributed Computation of $\lambda$-edge connected cycle covers}
	\label{fig:main-algorithm-non-tree-intro}
\end{figure}

\section{Sparse Connectivity Certificates}
Finally, we consider the related problem of computing sparse connectivity certificates.
We start by observing that an FT-spanner for the graph is also a connectivity certificate. 

\paragraph{Fault Tolerant Spanners.}
Fault tolerant (FT) spanners \cite{levcopoulos1998efficient,chechik2010fault} are sparse subgraphs that preserve pairwise distances in $G$ (up to some multiplicative stretch) even when several edges or vertices in the graph fails. These spanners have been introduced by Levcopoulos for geometric graphs \cite{levcopoulos1998efficient}, and later on by Chechik et al. \cite{chechik2010fault} for general graphs. 

\begin{definition}[Fault Tolerant Spanners]
For positive integers $k,f$, an $f$ edge fault-tolerant $(2k-1)$ spanner for an $n$-vertex graph $G=(V,E)$ is a subgraph $H \subseteq G$ satisfying that $\dist(u,v,H \setminus F) \leq (2k-1) \dist(u,v,G \setminus F)$ for every $u, v \in V$ and $F \subseteq E$, $|F|\leq f$.
Vertex fault-tolerant spanners are defined analogously where the fault $F \subset V$.
\end{definition}\vspace{-3pt}
Chechik et al. \cite{chechik2010fault} gave a generic algorithm for computing these spanners against edge failures. 
\vspace{-3pt}\begin{fact}\label{fc:ftspanner}\cite{chechik2010fault}
Let $\cA$ be an algorithm for computing the standard (fault-free) $(2k-1)$ spanner with $O(n^{1+1/k})$ edges and time $t$. Then one can compute $f$ edge fault-tolerant $(2k-1)$ spanners with $O(f \cdot n^{1+1/k})$ edges in time $O(f \cdot t)$.
\end{fact}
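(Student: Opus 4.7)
The plan is to construct the fault-tolerant spanner iteratively by running $\cA$ on $f+1$ successively thinned copies of $G$. Concretely, I would set $H_0 := \cA(G)$ and, for $i = 1, \ldots, f$, define $G_i := G \setminus \bigcup_{j < i} H_j$ and $H_i := \cA(G_i)$, finally returning $H := \bigcup_{i=0}^{f} H_i$. The size and running-time bounds are immediate: each $H_i$ has $O(n^{1+1/k})$ edges and is computed in time $t$, and there are $f+1$ of them, yielding $O(f\cdot n^{1+1/k})$ edges in $O(f\cdot t)$ total time.

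The main structural claim I would prove is that for every edge $e = (x,y) \in E(G) \setminus H$, the spanner $H$ contains $f+1$ pairwise edge-disjoint $x$-$y$ paths, each of length at most $2k-1$. The argument is direct: since $e \notin H_j$ for any $j$, the edge $e$ survives into every $G_i$, so applying the spanner property of $\cA$ on $G_i$ guarantees an $x$-$y$ path $P_i \subseteq H_i$ of length at most $2k-1$. The paths $P_0, \ldots, P_f$ are edge-disjoint because the $H_i$'s are themselves pairwise edge-disjoint by construction (edges chosen into $H_{i-1}$ are removed before $G_i$ is formed).

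From this structural claim, the fault-tolerance guarantee follows by a short pigeonhole argument. Fix any $F \subseteq E$ with $|F| \leq f$ and any pair $u, v \in V$, and let $P^*$ be a shortest $u$-$v$ path in $G \setminus F$. For each edge $e \in P^*$ we have $e \notin F$. If $e \in H$, then $e$ is already available in $H \setminus F$; otherwise, among the $f+1$ edge-disjoint length-$(2k-1)$ bypasses from the structural claim, at least one avoids all $|F| \leq f$ failed edges, giving a replacement in $H \setminus F$ of length at most $2k-1$. Concatenating these pieces produces a $u$-$v$ walk in $H\setminus F$ of length at most $(2k-1)\cdot \dist(u,v, G\setminus F)$, which is exactly the $f$-edge fault-tolerant $(2k-1)$-spanner property.

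The only subtle point I anticipate is maintaining the invariant that the $H_i$'s are genuinely pairwise edge-disjoint across iterations, since it is precisely this edge-disjointness that turns the $f+1$ local bypasses into a pigeonhole-style fault-tolerance guarantee. Once that is handled by the explicit removal step in the definition of $G_i$, the argument goes through treating $\cA$ entirely as a black box, so that any improvement in fault-free spanner constructions (including those running in the distributed model used elsewhere in the paper) immediately yields a matching improvement for the fault-tolerant version.
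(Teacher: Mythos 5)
Your proof is correct and reproduces the standard iterative ``peel-off-and-recompute'' construction of Chechik, Langberg, Peleg, and Roditty \cite{chechik2010fault}, which is exactly what the paper invokes here by citation without reproving: the paper treats this as a black-box fact, and your argument (edge-disjoint $H_i$'s, $f+1$ short bypasses for every non-spanner edge, pigeonhole over $F$, concatenation along a shortest path of $G\setminus F$) is precisely how the cited theorem is established.
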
 
Dinitz and Krauthgamer provided a similar transformation for vertex faults that is based on the FT-sampling technique, see Thm. 2.1 of \cite{dinitz2011fault}.

\paragraph{Certificates from Fault Tolerant Spanners.}
The relation between FT-spanners and connectivity certificates is based on the following observation.
\vspace{-7pt}\begin{observation}\label{obs:ftsc}
An $f$ \emph{edge} (resp., vertex) FT spanner $H \subseteq G$ is a certificate for the $f$ \emph{edge} (resp., vertex) connectivity of the graph.
\end{observation}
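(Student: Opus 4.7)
The plan is to establish both directions of the ``iff'' separately, using the FT-spanner property only in the non-trivial direction. The easy direction is that if $H$ is $f$-edge-connected then so is $G$: any edge subset $F \subseteq E(G)$ with $|F| < f$ that disconnects $G$ exhibits a vertex bipartition $(A,B)$ with no crossing edges in $G \setminus F$, and since $E(H) \subseteq E(G)$ there are no crossing edges in $H \setminus F$ either, so $F \cap E(H)$ is a cut of $H$ of size at most $|F| < f$. This direction in fact holds for \emph{any} subgraph and does not use the spanner guarantee.

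The non-trivial direction is: if $G$ is $f$-edge-connected then $H$ is $f$-edge-connected. I would fix an arbitrary $F \subseteq E$ with $|F| \leq f-1$; since $G$ is $f$-edge-connected, $G \setminus F$ is connected, so $\dist(u,v,G \setminus F) < \infty$ for every pair $u,v \in V$. Invoking the $f$ edge FT-spanner property (which is applicable since $|F| \leq f-1 \leq f$) gives
\[
\dist(u,v,H \setminus F) \,\leq\, (2k-1)\,\dist(u,v,G \setminus F) \,<\, \infty,
\]
so $H \setminus F$ is connected. As this holds for every $F$ with $|F| \leq f-1$, $H$ is $f$-edge-connected, completing the equivalence.

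The vertex case follows the same template, with vertex subsets $F \subset V$ in place of edge subsets and an appeal to the vertex FT-spanner definition in the second direction. I do not anticipate any real obstacle: the observation is essentially a direct reading of the FT-spanner definition, whose preservation-of-distance property is strictly stronger than the preservation-of-connectivity under $\leq f-1$ failures that a $\lambda$-certificate at $\lambda = f$ requires. The only point to be careful about is the off-by-one between the FT-spanner tolerance parameter $f$ and the number of failures $f-1$ needed to witness $f$-connectivity, but this is comfortably absorbed by the FT-spanner definition.
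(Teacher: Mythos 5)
Your proof is correct and, if anything, cleaner than the paper's. The paper proves the non-trivial direction by taking an $s$-$t$ path $P$ in $G \setminus F$ and patching each missing edge $(u,v) \in P \setminus H$ with a short detour in $H \setminus F$ (applying the spanner bound locally with $\dist(u,v,G\setminus F)=1$), then concatenating; you instead apply the FT-spanner bound once, globally, to the pair $(s,t)$, observing that $(2k-1)\cdot\dist(s,t,G\setminus F) < \infty$ already gives connectivity. Both arguments rest on the same ingredient and are equally valid, but your single invocation avoids the unnecessary per-edge replacement. You also explicitly handle the easy converse direction and the off-by-one between the spanner tolerance $f$ and the $|F|\le f-1$ failures needed to witness $f$-connectivity, both of which the paper leaves implicit (indeed the paper's proof sets the spanner parameter to $\lambda-1$, so it actually proves a marginally tighter claim than the observation as stated, a slack you correctly note the spanner definition comfortably absorbs).
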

\begin{proof}[Proof of Observation \ref{obs:ftsc}]
Consider a $\lambda$-edge connected graph $G$, and let $H$ be an $f$-FT-spanner for $G$ with $f=\lambda-1$. 
We show that $H$ is $\lambda$-edge connected, by showing that for every vertex pair $s,t$ and a subset of $F$ edge faults, $|F|\leq f$, there exists an $s$-$t$ path in $H \setminus F$. 
Let $P$ be an $s$-$t$ path in $G \setminus F$. Since $G$ is $\lambda$-connected, such a path exists. For every edge $e=(u,v) \in P \setminus H$, it holds that $\dist(u,v,H \setminus F)\leq 2k-1$, thus in particular, every neighboring pair $u,v$ on $P$ are connected in $H \setminus F$, the claim follows. The proof of $\lambda$-vertex connected graphs works in the same manner.
%
\end{proof}
Since spanners with logarithmic stretch have linear size, fault tolerant spanners with logarithmic stretch are sparse connectivity certificates. By using the existing efficient (in fact local) distributed algorithms for spanners, we get:
\begin{observation}\label{obs:sparse-det}
(1) There exists a randomized algorithm for computing a sparse $\lambda$-edge certificate with $O(\lambda n)$ edges within $O(\lambda \cdot \log^{1+o(1)} n)$ rounds, w.h.p.;
(2) There exists a randomized algorithm for computing a $\lambda$-vertex certificate for the $\lambda$-vertex connectivity with $O(\lambda^2 \cdot \log n)$ edges within $O(\lambda^3 \cdot \log^{2+o(1)} n)$ rounds, w.h.p.;
\end{observation}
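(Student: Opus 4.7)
The plan is to derive Observation \ref{obs:sparse-det} as a direct consequence of Observation \ref{obs:ftsc} combined with the known black-box transformations that turn standard spanners into fault-tolerant ones. The conceptual point is that because we only need to preserve \emph{connectivity} (and not any stretch bound) under the removal of up to $\lambda-1$ edges (or vertices), it suffices to build an $(\lambda-1)$-fault-tolerant $(2k-1)$-spanner for \emph{any} finite stretch. Choosing $k=\Theta(\log n)$ in the underlying spanner makes its fault-free size $O(n^{1+1/k})=O(n)$, which ultimately controls the certificate size.

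For part (1), I would instantiate Fact \ref{fc:ftspanner} with $f=\lambda-1$ on top of a standard distributed $(2k-1)$-spanner algorithm that produces $O(n)$ edges in $t=\log^{1+o(1)} n$ rounds (obtained from known local spanner constructions run with $k=\Theta(\log n)$). The Chechik--Langberg--Peleg transformation then delivers an $(\lambda-1)$-edge FT spanner $H$ with $O(\lambda n)$ edges in $O(\lambda \cdot t) = O(\lambda \cdot \log^{1+o(1)} n)$ rounds. By Observation \ref{obs:ftsc}, $H$ is the desired $\lambda$-edge connectivity certificate.

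For part (2), I would invoke the analogous transformation of Dinitz--Krauthgamer (Theorem~2.1 of \cite{dinitz2011fault}) for \emph{vertex} FT spanners. Their scheme samples random vertex subsets (rather than edge subsets) in each of several trials and unions the spanners computed on each sampled induced subgraph. Because the combinatorics of $\lambda$-vertex-fault scenarios is heavier than that of edge faults, both the size and round overheads are polynomial in $\lambda$, with larger exponents than in the edge case. Applying this to the same fault-free $O(\log n)$-stretch base spanner yields a vertex FT spanner with the stated parameters, which again by Observation \ref{obs:ftsc} is a $\lambda$-vertex certificate.

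The main technicality I would be careful about is that both FT-spanner transformations, originally stated in the centralized setting, are faithfully implementable in the \congest\ model with only the claimed multiplicative round overhead. This reduces to checking that the sampling step of each trial is purely local: every node flips its own coins to decide which of its incident edges (or whether it itself) participate, so each trial reduces to an execution of the underlying fault-free spanner algorithm on an induced subgraph. Once this is verified, the total round complexity is the number of trials times $t$, and the final edge count is the number of trials times the fault-free spanner size, recovering the bounds in the statement.
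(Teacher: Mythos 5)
Your proposal is correct and follows essentially the same route as the paper: instantiate Fact~\ref{fc:ftspanner} (for edge faults) and the Dinitz--Krauthgamer transformation (for vertex faults) with a fault-free spanner of stretch $O(\log n)$, linear size, and $\log^{1+o(1)}n$ round complexity, then invoke Observation~\ref{obs:ftsc} to conclude the resulting FT-spanner is a connectivity certificate. The only difference is that the paper names a specific base algorithm (Pettie's ultra-sparse spanner~\cite{Pettie10}), whereas you leave it as a generic $O(n)$-edge, $\log^{1+o(1)}n$-round spanner; the choice of $f=\lambda-1$ and the locality of the per-trial sampling are handled the same way in both.
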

\begin{proof}
Claim (1) follows by combining the ultra-sparse spanners construction of Pettie \cite{Pettie10} with Fact \ref{fc:ftspanner}.
To obtain sparse certificates, we use Fact \ref{fc:ftspanner} with algorithm $\cA$ taken to be the ultra-sparse algorithm by Pettie \cite{Pettie10}. This algorithm computes an $O(2^{\log^* n} \log n)$-spanner $H$ with $O(n)$ edges within $O(\log^{1+o(1)} n)$ rounds. By taking $\lambda$ disjoint copies of such spanner, constructed sequentially one after the other, we obtain a certificate with $O(\lambda \cdot n)$ edges. 
In the same manner, Claim (2) follows by plugging the algorithm of Pettie \cite{Pettie10} in the meta algorithm for FT-spanners against vertex faults by Dinitz and Krauthgamer (Thm. 2.1 in \cite{dinitz2011fault}).
\end{proof}
While Obs. \ref{obs:sparse-det} gives efficient algorithm when $\lambda=O(1)$, it is less efficient for large connectivity values. In the next lemma we apply the edge sampling technique of Karger \cite{karger1999random} to omit the dependency in $\lambda$ in the round complexity. Ideas along this line appear in \cite{ghaffari2013distributed}.
\vspace{-3pt}\begin{lemma}
For every $\lambda=\Omega(\log n)$, and $\epsilon \in [0,1]$, given an $\lambda$-edge connected graph $G$, one can compute, w.h.p., an $(1-\epsilon)\lambda$-edge sparse certificate within $O(1/\epsilon^2 \cdot \log^{2+o(1)} n)$ rounds.
\end{lemma}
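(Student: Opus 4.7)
The idea is to replace the $\lambda$ sequential FT-spanner builds of Obs.~\ref{obs:sparse-det} by a random partition of $E$ into $q=\Theta(\lambda\epsilon^2/\log n)$ edge classes, build one FT-spanner certificate per class, and take the union. Karger's sampling bound will guarantee that each class inherits large enough connectivity, edge-disjointness of the classes will let the $q$ certificate constructions run in parallel in \congest, and summing the per-class $(1-\epsilon/2)k$-connectivity over $q$ classes will reproduce $(1-\epsilon)\lambda$-connectivity globally.

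I would first dispose of the small regime: if $\lambda\le c_0\log n/\epsilon^2$ for a sufficiently large constant $c_0$, Obs.~\ref{obs:sparse-det} already yields a $\lambda$-certificate with $O(\lambda n)$ edges in $O(\lambda\log^{1+o(1)}n)=O(\log^{2+o(1)}n/\epsilon^2)$ rounds. Otherwise, set $k=c_0\log n/\epsilon^2$ and $q=\lfloor(1-\epsilon/2)\lambda/k\rfloor$. Each edge $e=(u,v)$ picks a group $g(e)\in[q]$ by hashing the pair $(\id(u),\id(v))$ against a common random seed (broadcast in $O(D)$ rounds up front, so no ongoing communication is required); let $G_i$ be the subgraph of edges with $g(e)=i$. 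Since each edge lies in $G_i$ independently with probability $1/q$ and $\lambda/q\ge c_0\log n/\epsilon^2$, Karger's sampling theorem gives, w.h.p., $|E_{G_i}(S,\bar S)|\in (1\pm\epsilon/2)|E_G(S,\bar S)|/q$ for every cut $(S,\bar S)$; in particular each $G_i$ is $(1-\epsilon/2)k$-edge connected, and a union bound over the $q\le n$ classes preserves w.h.p.

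On each $G_i$ I then run the FT-spanner certificate construction of Obs.~\ref{obs:sparse-det} with $f_i=(1-\epsilon/2)k-1$ faults, producing $H_i\subseteq G_i$ of size $O(kn)$ that is a $(1-\epsilon/2)k$-edge-connectivity certificate of $G_i$, in $O(k\log^{1+o(1)}n)=O(\log^{2+o(1)}n/\epsilon^2)$ rounds. Crucially, the $q$ runs execute \emph{in parallel}: since the $G_i$'s are edge-disjoint and the Baswana--Sen / Pettie spanner primitive underlying Obs.~\ref{obs:sparse-det} communicates only along edges of its input subgraph, each edge of $G$ carries $O(\log n)$ bits per round dedicated to its single class, fitting within the \congest\ bandwidth.

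Finally set $H=\bigcup_i H_i$. Edge-disjointness of the classes gives $|E(H)|=\sum_i|E(H_i)|=O(qkn)=O(\lambda n)$, and for every cut $(S,\bar S)$, $|E_H(S,\bar S)|=\sum_i|E_{H_i}(S,\bar S)|\ge q(1-\epsilon/2)k\ge(1-\epsilon)\lambda$, certifying $(1-\epsilon)\lambda$-edge connectivity. The main obstacle is the parallelism step: one must verify that the FT-spanner primitive can be implemented to communicate only along edges of its input subgraph, so that the $q$ simultaneous instances on the edge-disjoint $G_i$'s do not collide on shared edges. Once this locality property is checked, the rest is a direct application of Karger concentration together with a union bound over the $q$ classes.
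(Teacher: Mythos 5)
Your proposal matches the paper's proof in all essentials: randomly partition $E$ into $\Theta(\lambda\epsilon^2/\log n)$ edge-disjoint classes, invoke Karger's sampling theorem to argue each class inherits connectivity $\Theta(\log n/\epsilon^2)$, run the FT-spanner certificate of Obs.~\ref{obs:sparse-det} on each class in parallel (exploiting edge-disjointness for \congest\ bandwidth), and take the union. The only cosmetic difference is the final verification step---you sum per-class cut sizes directly, while the paper pigeonholes the fault set $F$ into a class with few faults---but these are equivalent one-line arguments for the same fact.
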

\begin{proof}
We restrict attention to $\lambda=\Omega(\log n)$, as otherwise, the lemma follows immediately from Obs. \ref{obs:sparse-det}. 
The key idea for omitting the dependency in $\lambda$ is by randomly decomposing the graph into spanning subgraphs each with connectivity $\min\{\lambda, \Theta(\log n /\epsilon^2)\}$ using random edge sampling, and to run the algorithm of Obs. \ref{obs:sparse-det} on each of the subgraphs.
We randomly put each edge of $G$ in one of $\mu$ subgraphs $G_1,\ldots, G_\mu$ for $\mu=\lceil \lambda \cdot \epsilon^2/(20\log n) \rceil$. Karger \cite{karger1999random} showed that each subgraph $G_i$ has edge-connectivity in $(1\pm \epsilon)\lambda/\mu$ with high probability. In addition, the summation of the edge-connectivities $\lambda_1, \ldots, \lambda_\mu$ of the subgraphs $G_1,\ldots, G_{\mu}$ is at least $\lambda(1-\epsilon)$.
The algorithm then computes a $\lambda'$-edge sparse certificate $H_i$ for $\lambda'=(1-\epsilon)\lambda/\mu$ in each $G_i$ subgraph simultaneously. The output certificate $H$ is the union of all $H_i$ subgraphs. 

We next analyze the construction, and start with round complexity.
Since the $G_i$ subgraphs are edge disjoint, applying the algorithm of Obs. \ref{obs:sparse-det} takes $O(\lambda' \cdot \log^{1+o(1)}n)$ rounds which is $O(\log^{2+o(1)}n)$ rounds.
The edge bound follows immediately as $|E(H)|=c \cdot\mu \cdot \lambda' n=O(\lambda n)$ as required. 
It remains to show that $H$ is indeed a $(1-\epsilon)\lambda$-edge connectivity certificate. 
Consider a pair of vertices $s,t$, and a sequence of at most $(1-\epsilon)\lambda$ edge faults $F$. We will show that $s$ and $t$ are connected in $H \setminus F$. 
Since the $\mu$ subgraphs are edge-disjoint, there must be a subgraph $G_i$ containing at most $\lambda'=(1-\epsilon)\lambda/\mu$ of the faults. Let $F_i=F \cap G_i$.
Since $G_i$ is $\lambda'$-edge connected, $s$ and $t$ are connected in $G_i \setminus F_i$. Since $H_i$ is a $\lambda'$-edge certificate of $G_i$, it also holds that $s$ and $t$ are connected in $H_i \setminus F_i$ and thus also in $H \setminus F$ (i.e., as by definition, $(F \setminus F_i) \cap H_i=\emptyset$).
%
%
The claim follows. 
\end{proof}



%

\newpage
\bibliographystyle{alpha} 
\bibliography{crypto}

\newcommand{\etalchar}[1]{$^{#1}$}
\begin{thebibliography}{10}

\bibitem{alon1986explicit}
Noga Alon.
\newblock Explicit construction of exponential sized families of k-independent
  sets.
\newblock {\em Discrete Mathematics}, 58(2):191--193, 1986.

\bibitem{AlonCC19}
Noga Alon, Shiri Chechik, and Sarel Cohen.
\newblock Deterministic combinatorial replacement paths and distance
  sensitivity oracles.
\newblock In {\em 46th International Colloquium on Automata, Languages, and
  Programming, {ICALP} 2019, July 9-12, 2019, Patras, Greece.}, pages
  12:1--12:14, 2019.

\bibitem{Censor-HillelD17}
Keren Censor{-}Hillel and Michal Dory.
\newblock Fast distributed approximation for {TAP} and 2-edge-connectivity.
\newblock In {\em 21st International Conference on Principles of Distributed
  Systems, {OPODIS} 2017, Lisbon, Portugal, December 18-20, 2017}, pages
  21:1--21:20, 2017.

\bibitem{chechik2010fault}
Shiri Chechik, Michael Langberg, David Peleg, and Liam Roditty.
\newblock Fault tolerant spanners for general graphs.
\newblock {\em SIAM Journal on Computing}, 39(7):3403--3423, 2010.

\bibitem{DagaSTOC19}
Mohit Daga, Monika Henzinger, Danupon Nanongkai, and Saranurak.
\newblock Distributed edge connectivity in sublinear time.
\newblock In {\em STOC}, 2019.

\bibitem{dinitz2011fault}
Michael Dinitz and Robert Krauthgamer.
\newblock Fault-tolerant spanners: better and simpler.
\newblock In {\em Proceedings of the 30th annual ACM SIGACT-SIGOPS symposium on
  Principles of distributed computing}, pages 169--178. ACM, 2011.

\bibitem{Dory18}
Michal Dory.
\newblock Distributed approximation of minimum k-edge-connected spanning
  subgraphs.
\newblock In {\em Proceedings of the 2018 {ACM} Symposium on Principles of
  Distributed Computing, {PODC} 2018, Egham, United Kingdom, July 23-27, 2018},
  pages 149--158, 2018.

\bibitem{EvenIR98}
Shimon Even, Gene Itkis, and Sergio Rajsbaum.
\newblock On mixed connectivity certificates.
\newblock {\em Theor. Comput. Sci.}, 203(2):253--269, 1998.

\bibitem{friedman1984constructing}
J~Friedman.
\newblock Constructing o (n log n) size monotone formulae for the k-th
  elementary symmetric polynomial of n boolean variables.
\newblock In {\em Foundations of Computer Science, 1984. 25th Annual Symposium
  on}, pages 506--515. IEEE, 1984.

\bibitem{GhaffariThesis17}
Mohsen Ghaffari.
\newblock {\em Improved Distributed Algorithms for Fundamental Graph Problems}.
\newblock PhD thesis, MIT, {USA}, 2017.
\newblock URL:
  \url{https://groups.csail.mit.edu/tds/papers/Ghaffari/PhDThesis-Ghaffari.pdf}.

\bibitem{ghaffari2013distributed}
Mohsen Ghaffari and Fabian Kuhn.
\newblock Distributed minimum cut approximation.
\newblock In {\em International Symposium on Distributed Computing}, pages
  1--15. Springer, 2013.

\bibitem{ghaffari2013distributed-Arxiv}
Mohsen Ghaffari and Fabian Kuhn.
\newblock Distributed minimum cut approximation.
\newblock {\em arXiv preprint arXiv:1305.5520}, 2013.

\bibitem{GhaffariK18}
Mohsen Ghaffari and Fabian Kuhn.
\newblock Derandomizing distributed algorithms with small messages: Spanners
  and dominating set.
\newblock In {\em 32nd International Symposium on Distributed Computing, {DISC}
  2018, New Orleans, LA, USA, October 15-19, 2018}, pages 29:1--29:17, 2018.

\bibitem{karger1999random}
David~R Karger.
\newblock Random sampling in cut, flow, and network design problems.
\newblock {\em Mathematics of Operations Research}, 24(2):383--413, 1999.

\bibitem{levcopoulos1998efficient}
Christos Levcopoulos, Giri Narasimhan, and Michiel Smid.
\newblock Efficient algorithms for constructing fault-tolerant geometric
  spanners.
\newblock In {\em Proceedings of the thirtieth annual ACM symposium on Theory
  of computing}, pages 186--195. ACM, 1998.

\bibitem{nagamochi1992linear}
Hiroshi Nagamochi and Toshihide Ibaraki.
\newblock A linear-time algorithm for finding a sparsek-connected spanning
  subgraph of ak-connected graph.
\newblock {\em Algorithmica}, 7(1-6):583--596, 1992.

\bibitem{nanongkai2014almost}
Danupon Nanongkai and Hsin-Hao Su.
\newblock Almost-tight distributed minimum cut algorithms.
\newblock In {\em International Symposium on Distributed Computing}, pages
  439--453. Springer, 2014.

\bibitem{parter-yogevSODAa}
Merav Parter and Eylon Yogev.
\newblock Low congestion cycle covers and their applications.
\newblock {\em SODA}, 2019.

\bibitem{parter-yogevCycles}
Merav Parter and Eylon Yogev.
\newblock Optimal short cycle decomposition in almost linear time.
\newblock {\em ICALP}, 2019.

\bibitem{Peleg:2000}
David Peleg.
\newblock {\em Distributed Computing: A Locality-sensitive Approach}.
\newblock SIAM, 2000.

\bibitem{Pettie10}
Seth Pettie.
\newblock Distributed algorithms for ultrasparse spanners and linear size
  skeletons.
\newblock {\em Distributed Computing}, 22(3):147--166, 2010.
\newblock URL: \url{https://doi.org/10.1007/s00446-009-0091-7}, \href
  {http://dx.doi.org/10.1007/s00446-009-0091-7}
  {\path{doi:10.1007/s00446-009-0091-7}}.

\bibitem{pritchard2011fast}
David Pritchard and Ramakrishna Thurimella.
\newblock Fast computation of small cuts via cycle space sampling.
\newblock {\em ACM Transactions on Algorithms (TALG)}, 7(4):46, 2011.

\bibitem{detND-Arxiv}
V{\'{a}}clav Rozhon and Mohsen Ghaffari.
\newblock Polylogarithmic-time deterministic network decomposition and
  distributed derandomization.
\newblock {\em arXiv preprint arXiv:1907.10937}, 2019.

\bibitem{sarma2012distributed}
Atish~Das Sarma, Stephan Holzer, Liah Kor, Amos Korman, Danupon Nanongkai,
  Gopal Pandurangan, David Peleg, and Roger Wattenhofer.
\newblock Distributed verification and hardness of distributed approximation.
\newblock {\em SIAM Journal on Computing}, 41(5):1235--1265, 2012.

\bibitem{thurimella1997sub}
Ramakrishna Thurimella.
\newblock Sub-linear distributed algorithms for sparse certificates and
  biconnected components.
\newblock {\em Journal of Algorithms}, 23(1):160--179, 1997.

\bibitem{TCS-010}
Salil~P. Vadhan.
\newblock Pseudorandomness.
\newblock {\em Foundations and Trends® in Theoretical Computer Science},
  7(1–3):1--336, 2012.
\newblock URL: \url{http://dx.doi.org/10.1561/0400000010}, \href
  {http://dx.doi.org/10.1561/0400000010} {\path{doi:10.1561/0400000010}}.

\bibitem{wang2003complex}
Xiao~Fan Wang and Guanrong Chen.
\newblock Complex networks: small-world, scale-free and beyond.
\newblock {\em IEEE circuits and systems magazine}, 3(1):6--20, 2003.

\bibitem{weimann2013replacement}
Oren Weimann and Raphael Yuster.
\newblock Replacement paths and distance sensitivity oracles via fast matrix
  multiplication.
\newblock {\em ACM Transactions on Algorithms (TALG)}, 9(2):14, 2013.

\end{thebibliography}

\end{document}